\DeclareFontFamily{U}{matha}{\hyphenchar\font45}
\DeclareFontShape{U}{matha}{m}{n}{
	<-6> matha5 <6-7> matha6 <7-8> matha7
	<8-9> matha8 <9-10> matha9
	<10-12> matha10 <12-> matha12
}{}
\DeclareSymbolFont{matha}{U}{matha}{m}{n}
\DeclareFontFamily{U}{mathx}{\hyphenchar\font45}
\DeclareFontShape{U}{mathx}{m}{n}{
	<-6> mathx5 <6-7> mathx6 <7-8> mathx7
	<8-9> mathx8 <9-10> mathx9
	<10-12> mathx10 <12-> mathx12
}{}
\DeclareSymbolFont{mathx}{U}{mathx}{m}{n}
\DeclareMathDelimiter{\vvvert} {0}{matha}{"7E}{mathx}{"17}%
\DeclarePairedDelimiterX{\normiii}[1]
{\vvvert}
{\vvvert}
{\ifblank{#1}{\:\cdot\:}{#1}}
\newcolumntype{S}{>{\centering\arraybackslash}p{5em}}
\newtheorem{lemma}{Lemma}
\newtheorem{proposition}{Proposition}
\newenvironment{proof}
{\textit{Proof:} }
{$\square$}
\begin{document}
\title{Refined Nonlinear Rectenna  Modeling and Optimal  Waveform Design for Multi-User Multi-Antenna  Wireless Power Transfer}

\author{ 
	Samith Abeywickrama,   Rui Zhang,  \IEEEmembership{Fellow, IEEE},  and  Chau Yuen,  \IEEEmembership{Fellow, IEEE} 
	
	\thanks{
	
		S. Abeywickrama is with the Department of Electrical and Computer Engineering, National University of Singapore, Singapore, and also with Singapore University of Technology and Design, Singapore (e-mail: samith@u.nus.edu).
		
		R. Zhang is with the Department of Electrical and Computer Engineering, National University of Singapore, Singapore (e-mail: elezhang@nus.edu.sg).
		
		C. Yuen is with  Singapore University of Technology
		and Design, Singapore (e-mail: yuenchau@sutd.edu.sg).
	}
}

\vspace{-1cm}

\date{}
\bibliographystyle{ieeetr}
\maketitle
\begin{abstract}
In this paper, we study the optimal waveform design   for   wireless power transfer (WPT)  from  a multi-antenna energy transmitter (ET) to multiple  single-antenna energy receivers (ERs) simultaneously  in multi-path  frequency-selective channels. First, we propose  a refined nonlinear current-voltage model of the diode in the ER rectifier, and accordingly derive new expressions for the output  direct current (DC) voltage and  corresponding  harvested power at the ER. Leveraging this new rectenna model,   we first consider   the single-ER case and study the multisine-based power waveform design based on the wireless channel to maximize the harvested power at the ER. We propose two efficient  algorithms for finding high-quality suboptimal solutions to this non-convex optimization problem. Next, we extend our formulated waveform design problem to the general multi-ER case for maximizing the weighted sum of the harvested powers by all ERs, and propose an efficient difference-of-convex-functions programming (DCP)-based algorithm for solving this problem.  Finally, we demonstrate the superior  performance of our proposed waveform designs based on the new rectenna model over existing schemes/models via  simulations.
\end{abstract}

\begin{IEEEkeywords}
Wireless power transfer, power waveform optimization,  nonlinear rectenna model.	
\end{IEEEkeywords}

\section{Introduction}

Radio frequency (RF) transmission  enabled wireless power transfer (WPT), also known as far-field WPT, has become a promising technology for enabling reliable as well as  perpetual power supply to freely located low-power Internet of Things (IoT) devices \cite{8476597,888}. In order to maximize the WPT efficiency from  an energy transmitter (ET) to one or more energy receivers (ERs), the design of adaptive power signal waveform at the ET based on its wireless channels with all ERs is crucial \cite{8476597,888}.   

The earlier  works in WPT and WPT-enabled communications have usually assumed  the linear energy harvesting (EH) model, where the RF-to-direct current (DC) power conversion efficiency, $\eta$, at the ER is assumed to be constant regardless of its incident RF signal power and waveform. Though providing an upper bound on the energy transmission efficiency from the ET to ERs (with $\eta$ set equal to its maximum value of one at all ERs) as well as  a reasonable approximation for very  low incident power at the ER, the linear EH model is inaccurate in many  practical scenarios due to the \textit{nonlinearity} of the ER  rectenna circuit that consists of two main components, namely, an antenna for intercepting the RF waveforms from the air, and a diode rectifier for
converting the collected RF waveform to DC voltage.
Specifically, $\eta$ and hence the end-to-end energy transmission efficiency  typically increase with the  peak-to-average power ratio (PAPR) of the incident signal waveform at the ER as well as  its average RF  power, but with diminishing returns and eventually may get saturated or even degraded with the increase of signal peak amplitude/power  due to the diode reverse breakdown effect  \cite{Boshkovska:2015,7054689,5972612,9153166}. Therefore, power waveform design under the practical rectenna nonlinearity is essential to maximize  the efficiency  of RF-based  WPT systems.

By taking the rectenna nonlinearity into account,  \cite{Clerckx:2016b,7932461,8008785,9204454,8470248,moghadam2017waveform,jsac_waveform,9096618} have studied the power waveform design problem for far-field WPT  by adopting the multisine waveform structure for the transmit signal. In \cite{Clerckx:2016b,7932461,8008785,9204454,8470248},  the rectenna nonlinearity is approximately characterized  via the second and higher order terms in the truncated Taylor expansion of the diode output current in terms of the input voltage. Based on this model, \cite{Clerckx:2016b} proposes a successive convex programming (SCP) algorithm for a single-ER WPT system to approximately design the amplitudes of different frequency tones of the multisine power waveform in an iterative manner, where at  each iteration a geometric programming (GP) problem needs to be solved. In \cite{7932461}, a low-complexity heuristic algorithm is proposed for a single-ER WPT system to obtain a closed-form solution for the power waveform  without the need of  solving any optimization problem. In \cite{8008785}, a large-scale multi-ER WPT system (where the number of ET antennas and number of frequency tones are large) is considered. Specifically, \cite{8008785} considers two waveform design problems to maximize the weighted-sum of the output DC powers at all ERs  and to maximize the minimum output DC power among all ERs, respectively. In \cite{9204454}, power waveforms for  a single-ER   WPT system are designed based on two  combining strategies at the ER, namely DC and RF combining, respectively. In \cite{8470248}, a new form of power waveform for a single-ER WPT system is developed based on phase sweeping transmit diversity. 

Different from  \cite{Clerckx:2016b,7932461,8008785,9204454,8470248},  \cite{moghadam2017waveform,jsac_waveform,9096618} consider the nonlinear EH model based on a circuit analysis that  captures the rectenna nonlinearity without relying on Taylor approximation.  In particular, \cite{moghadam2017waveform} studies a power waveform optimization problem to maximize the output  DC power of a single-input single-output (SISO) WPT system, and  proposes two low-complexity solutions based on frequency-domain maximal ratio transmission (MRT) and SCP, respectively. In \cite{jsac_waveform,9096618}, the output DC power of an ER is derived in a closed form by applying the Lambert $W$ function to the EH model proposed in \cite{moghadam2017waveform}, and then,  power waveform design problems are studied to  facilitate the energy transfer in a multi-ER WPT system.

In this paper, we  study  the power waveform design problem for a general WPT system  with a multi-antenna ET and multiple single-antenna ERs and fully exploit the nonlinear EH model in the multisine waveform design to maximize the end-to-end efficiency. Different from  prior works  \cite{Clerckx:2016b,7932461,8008785,9204454,8470248,moghadam2017waveform,jsac_waveform,9096618}, we first develop a refined  rectifier  model based on  circuit analysis that more accurately captures the characteristics of a practical rectenna as compared to those in \cite{moghadam2017waveform,jsac_waveform,9096618}. With  this new rectenna  model, we then formulate the power  waveform   optimization problems for the single-ER and  multi-ER cases, respectively,  to maximize the harvested power by the ER(s). The main contributions of this paper are summarized as follows.

\begin{itemize} 
	\item First, we propose a refined  nonlinear current-voltage model of the diode in the ER rectifier, and accordingly  derive a new closed-form expression for the output  DC voltage  in terms of the incident RF signal. Different from the existing EH models, the  new rectenna model not only captures the nonlinearity of the energy harvesting power with respect to the RF signal  waveform, but also takes into account the intrinsic forward and reverse breakdown current-voltage (I-V) characteristics of the rectifier diode. Furthermore, by assuming that the capacitance of the low pass filter (LPF) of the rectenna is sufficiently large in practice, our new model shows that maximizing the DC output power is equivalent to maximizing the time average of an exponential function in terms of the received  signal waveform. 
	\item Next, based on this new rectenna  model and by adopting a multisine waveform structure for the transmit signal, we formulate a problem to  optimize the subcarrier selection (i.e., selecting $N$ subcarriers for WPT from a larger number of equally-spaced $U$  subcarriers over the total available bandwidth) and the power  waveform with  selected subcarriers  for maximizing the harvested  DC power by a single ER. A two-step algorithm  is  proposed for finding an approximate solution to the formulated problem, which is non-convex  and thus difficult to be solved  optimally. First, we show that the optimal subcarrier selection for the power waveform can be obtained in  closed-form. Second, with fixed subcarrier selection, we propose two low-complexity algorithms for designing the  amplitude or power allocation over the selected subcarriers. The first algorithm is based on a closed-form solution, which corresponds to the well-known maximal ratio transmission (MRT) over the frequency subcarriers. While in the second algorithm, we employ SCP to iteratively allocate power over the selected subcarriers, where at each iteration the problem is approximated by a convex quadratically constrained linear programming (QCLP), for which the optimal solution is derived in closed form and thus can be efficiently computed.
	\item Furthermore,  we extend the single-ER power waveform optimization problem to the  general multi-ER case,  to maximize the weighted sum of the harvested  DC  power by all  ERs. To solve this more challenging problem than  that in the single-ER case, we propose a difference-of-convex-functions programming (DCP)-based iterative algorithm, where at each iteration the problem is approximated by a convex quadratically constrained quadratic  programming (QCQP) problem that can be efficiently solved. 
	\item Finally, simulation  results unveil substantial performance gains achieved by the proposed power waveform optimization solutions based 	on the refined  rectenna  model as compared to  existing schemes based on the conventional model. In particular, the newly proposed  subcarrier selection optimization leads to significantly better performance in both single-ER and multi-ER WPT cases, as compared  to the conventional designs without subcarrier selection (i.e., by setting $U=N$ directly and optimizing the  power waveform with $N$ equally-spaced subcarriers). Besides, it is also shown that the consideration of rectenna power saturation (or diode reverse breakdown) in our refined model yields  significantly improved  performance in a multi-ER WPT system as it helps to more flexibly balance the harvested DC  power at different ERs, especially when their  channels are in heterogeneous conditions.   
\end{itemize}

It is worth noting that this paper is an extension of our prior work  \cite{moghadam2017waveform}. In  \cite{moghadam2017waveform}, frequency-domain MRT and SCP methods are proposed to design the power waveform with equally-spaced  subcarriers for a point-to-point single-input-single-output (SISO)  WPT system (i.e., both the ET and ERs are equipped with a single antenna); while in this paper, we consider the more general multi-antenna ET and multiple ERs. Moreover, \cite{moghadam2017waveform} does not consider  the subcarrier selection and the  rectenna power saturation (or diode reverse breakdown) in the rectifier model, which are newly introduced in this paper.
 
The rest of this paper is organized as follows. Section II introduces the WPT system model.  Section III presents  the refined  rectenna model based on  circuit analysis. Section IV  formulates the new  multisine power waveform optimization problem for the single-ER case based on the refined rectenna  model  and presents the two-step approach for finding an approximate solution. Section V extends the single-ER power waveform optimization problem to the  general multi-ER case and presents the DCP-based iterative algorithm to solve it. Section VI presents simulation results to evaluate the effectiveness of the proposed rectenna  model and power waveform design  algorithms. Finally, we conclude the paper in Section VII.

\textit{Notations:} In this paper, scalars are denoted by italic letters, vectors and matrices are denoted by bold-face lower-case and upper-case letters, respectively. For a complex-valued vector $ \mathbf x $, $ \| \mathbf x \|_p $ and $ \mathbf x^H $ denote its $\ell_p$-norm and conjugate transpose, respectively. Scalar $x_i$ denotes the $i$-th element of vector $\mathbf x$.  $ \mathbb{C}^{x \times y} $  denotes the space of $ x \times y $ complex-valued matrices.  $\lceil a \rceil$ denotes the smallest integer greater than or equal to $a$. $ j $ denotes the imaginary unit, i.e., $ j^2 = -1 $. $ \mathrm{Re}\{\cdot\} $ represents the real part of a complex number.

\section{System Model}\label{Section:system_model}
As shown in Fig. \ref{fig:System_Model},  we consider a multi-user WPT system where an ET is employed to  deliver energy wirelessly to $K\ge 1$  ERs, denoted by the set ${\cal K}=\{1,\ldots, K\}$. We assume that the ET has $M\ge 1$ antennas, denoted  by $m \in {\cal M}= \{1,\ldots,M\}$, and each ER has  a single antenna. Furthermore, we assume that the available frequency band for WPT is  given by  $[f_{\min},~ f_{\max}]$ in Hertz (Hz) with $B=f_{\max}-f_{\min}$  denoting the bandwidth. Let $ {\cal U}= \{1,\ldots,U\}$ denote the set of equally spaced orthogonal subcarriers that span over the  frequency band. Accordingly, the $u$-th subcarrier frequency is given by $f_u=f_{0}+(u-1)\Delta_u$, where $f_0\ge f_{\min}$ and $\Delta_u>0$ are designed such that $f_0/ \Delta_u$ is an integer and $f_0+(U-1)\Delta_u \le f_{\max}$. In particular, we  set $\Delta_u= B/U$ and $f_0=\lceil f_{\min}/\Delta_u\rceil \Delta_u$.  

\begin{figure}[t!] \centering
	\centering \vspace{0mm}
	\includegraphics[trim = 0mm 0mm 0mm 0mm, clip,width=0.7\linewidth]{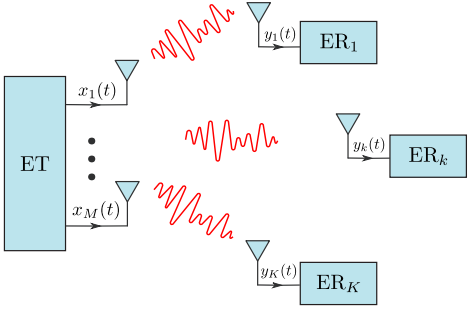}
	\caption{A multi-user multi-antenna wireless power transfer system.} \vspace{-0.5cm}
	\label{fig:System_Model} 
\end{figure}

We adopt the practical  multisine signal  structure for the power waveform at the ET   \cite{Clerckx:2016b}. Let  $\mathbf{ \tilde s}_m = [\tilde s_{m,1},\dots,\tilde s_{m,U}]^T $ be a vector constituting   sinewave amplitudes and phases over $ U $ subcarriers on antenna $m$, where $\tilde{s}_{m,u}=s_{m,u}\exp(j\phi_{m,u})$,  $s_{m,u}\ge 0$ and $ \phi_{m,u}\in [0,~2\pi)$ denote the amplitude and phase of the $u$-th sinewave  at frequency $f_u$, respectively. Hence,  the transmit signal at time $t$ on antenna $m$ is expressed as  
\begin{equation}
\label{x_m}
x_m(t)=\operatorname{Re}\Bigg\{\sum_{u \in  \cal U} \sqrt{2} \tilde{s}_{m,u} \exp(jw_ut) \Bigg\},
\end{equation}
where $w_u=2\pi f_u, \forall u \in \mathcal U $. Note that the number of sinewaves used for WPT in practice is generally  limited to $ N \leq U$ due to the practical PAPR constraint at each antenna of the ET \cite{Clerckx:2016b}. We thus have the following subcarrier selection constraint at each transmitting antenna: 
\begin{align} \label{l_0}
\|\mathbf{ \tilde s}_m\|_0 \leq N,  m =1,\dots, M.
\end{align}
Moreover, suppose that the ET has a total transmit power budget $P_T$, i.e.,
\begin{align} \label{eq:Sum_Power_constraint}
\dfrac{1}{T}\int_{T} \sum_{m \in \mathcal M} x^2_m(t) dt=  \sum_{m \in \mathcal M} \|\mathbf{ \tilde s}_m\|^2_2 \leq P_T,
\end{align}
where $T=1/\Delta_u$ denotes the period of $x_m(t)$.

The transmitted multisine waveform  propagates through the wireless channel with generally  multiple delayed  paths to each ER.  Let ${\cal L}_{k}$ denote the set consisting of all the paths between  the ET and  ER $k$.  Furthermore, we denote the delay, amplitude, and phase at the $m$-th antenna for the $l$-th path, $l \in {\cal L}_{k}$, by  $\tau_{l,m,k} >0$, $\alpha_{l,m,k} >0$, and $\xi_{l,m,k} \in [0, ~2\pi)$, respectively. 
The incident RF signal at ER $k$ is thus given by 
\begin{align} \label{eq:y(t)}
y_k(t)&= \operatorname{Re} \left\{\sum_{m \in \cal M} \sum_{u \in \cal U} \sum_{l \in {\cal L}_{k}}\sqrt{2} s_{m,u}\alpha_{l,m,k}  \exp(j \Delta_{l,m,k}(t))\right\}  \nonumber \\ &=\operatorname{Re} \left\{\sum_{m\in \cal M} \sum_{u\in \cal U}\sqrt{2}  \tilde{h}_{k,m,u} \tilde{s}_{m,u} 
\exp(j w_u t)\right\}, 
\end{align}
where $  \Delta_{l,m,k}(t)=w_u(t-\tau_{l,m,k})+\xi_{l,m,k}+\phi_{m,u}$ and  $ \tilde{h}_{k,m,u} = h_{k,m,u} \exp(j \psi_{k,m,u})= \sum_{l\in{\cal L}_{k}} \alpha_{l,m,k}  \exp(j ( -w_u  \tau_{l,m,k} + \xi_{l,m,k})) $, which denotes  the channel frequency response between antenna $m$ and ER  $k$ at frequency $f_u$, with $h_{k,m,u} \geq 0$ and $ \psi_{k,m,u} \in [0,~2\pi)$  respectively representing the amplitude and  phase. Let $ \mathbf{ \tilde h}_{k,m} = [\tilde{h}_{k,m,1}, \dots, \tilde{h}_{k,m,U}]^T  $. In this paper, we assume that the channel state information (CSI), i.e., $\{ \mathbf{ \tilde h}_{k,m} \}_{m \in \mathcal M}, $  are known at the ET (e.g., via the reverse-link channel training and estimation \cite{7089273})  and remain constant for a given period of time for WPT.  Based on such CSI, we  design  $\{ \mathbf{ \tilde s}_m \}_{m \in \mathcal M}$ in Sections \ref{Sec:single-RX WFormOPT} and \ref{Sec:multi-RX WFormOPT} for the single-ER and multi-ER cases, respectively.

\section{Refined Rectenna Model}\label{Section:Analytical_Model_Rectenna}
In this section, we present a refined   rectenna circuit model  to capture its intrinsic nonlinearity accurately and derive its output DC voltage as a function of the incident RF signal. We assume that the electric circuits of all ERs are identical,  and thus the ER  index $k$ is omitted in this section. 

\begin{figure}[t!] 
	\centering \vspace{0mm}
	\includegraphics[trim = 0mm 0mm 0mm 0mm, clip,width=0.7\linewidth]{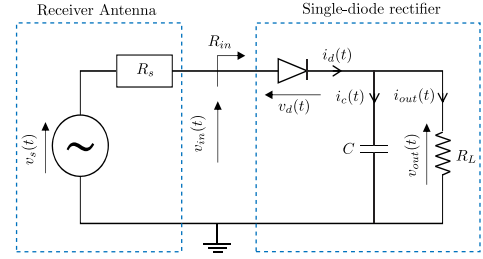}
	\caption{Rectenna circuit model.} \vspace{-0.5cm}
	\label{fig:first} 
\end{figure}

\subsection{Rectenna Circuit Analysis}\label{Sect:Rectenna_eq_circuit}
As shown in Fig. \ref{fig:first}, a typical rectenna consists of two main components, namely, an antenna for intercepting the RF waveforms from the air,  and a single-diode  rectifier for converting the collected RF waveform to DC for direct use or battery charging. 
The antenna is  commonly modelled as a voltage source $v_{s}(t)$ in series with a  resistance $R_{s}>0$, where $v_{s}(t)=2  \sqrt{R_{s}} y(t)$ \cite{Clerckx:2016b}. On the other hand, the rectifier consists of a single diode, which is   nonlinear,  followed by a low pass filter (LPF) connected to an electric load, with the load resistance  denoted by  $R_{L}>0$. As shown in Fig. \ref{fig:first}, we denote $R_{in}>0$ as the equivalent input resistance of the rectifier.  With perfect impedance matching, i.e., $R_{s}=R_{in}$, the input voltage of the rectenna, denoted by $v_{in} (t)$, is obtained as $v_{in}(t)=v_{s}(t)/ 2= \sqrt{R_{s}} y(t)$. Since $y( t)$ is periodic, it follows that  $v_{in} (t)$ is also periodic with the same period $T$.

\begin{figure}[t!]
	\centering
	\includegraphics[width=0.75\linewidth]{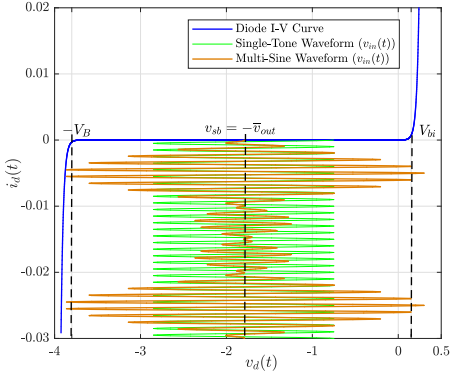}
	\caption{Self-bias behaviour and  reverse breakdown of the diode.}  
	\label{fig3:first} 
\end{figure}

In this paper, as compared to  prior works  \cite{Clerckx:2016b,7932461,8008785,9204454,8470248,moghadam2017waveform,jsac_waveform,9096618}, we consider a more accurate nonlinear diode model that captures both the forward and  reverse breakdown I-V characteristics of the rectifier diode. Let $i_d(t)$ and $v_d(t)$ denote the current passing through the diode and the voltage drop across it, respectively. We then have 
\begin{multline} \label{eq:Diode_iv}
i_{d}(t)=I_0\left(\exp\Big(\dfrac{v_d(t)}{\eta V_0}\Big)-1\right) \\ - I_{BV}\left(\exp \Big(\dfrac{-(v_d(t)+V_B)}{\eta V_0} \Big)\right),
\end{multline}
where $\eta>1$,  $V_0>0$, $I_0>0$, $ I_{BV} >0 $, and $ V_B > 0$ denote the diode ideality factor, thermal voltage, reverse bias saturation current, breakdown saturation current, and reverse breakdown voltage, respectively \cite{app_note_diode}. In particular, the first exponential term governs the diode behaviour when $ v_d(t) \geq 0 $, the second exponential term governs the diode behaviour when $ v_d(t) < -V_B  $, and both of the exponential terms  are negligible and $ i_{d}(t) \approx -I_0 $ when $ -V_B < v_d(t) < 0 $.

\subsection{Output  DC Power}\label{Section:DC_Voltage} 
By applying Kirchhoff's circuit laws to the electric circuit of the rectifier shown in Fig. \ref{fig:first}, we have the following set of equations:
\begin{align} 
&i_d(t)=i_c(t)+ i_{out}(t),\label{eq:Krichhof1}\\ 
&i_c(t)=C \dfrac{d v_{out}(t)}{dt},\label{eq:Krichhof2}\allowdisplaybreaks \\
&v_{d}(t)=v_{in}(t)-v_{out}(t),\label{eq:Krichhof3}\\
&v_{out}(t)= R_L i_{out}(t), \label{eq:Krichhof4}
\end{align}
where $v_{out}(t)$, $i_{out}(t)$, $C$, and $i_c(t)$ are the rectifier output voltage, load current, capacitance of the LPF, and the current passing through it, respectively. After some manipulations  based on  (\ref{eq:Diode_iv})--(\ref{eq:Krichhof4}), the relationship between the input and output voltages of the rectifier is obtained as
\begin{multline} \label{eq:vout(t)-vin(t)}
\hspace{-2mm} I_0\left(\exp\Big(\dfrac{v_{in}(t)-v_{out}(t)}{\eta V_0}\Big)-1\right) \\ - I_{BV}\left(\exp \bigg(\dfrac{-(v_{in}(t)-v_{out}(t)+V_B)}{\eta V_0} \bigg)\right) \\ =C\dfrac{d v_{out}(t)}{dt} + \dfrac{v_{out}(t)}{R_L}.\hspace{-2mm}
\end{multline}

Let  $v_{out}(t)=\overline{v}_{out}+ \tilde{v}_{out}(t)$, with  $\overline{v}_{out}$ and $\tilde{v}_{out}(t)$ denoting the DC and AC (alternating current)  components of the output voltage, respectively.  
Since the input voltage of the rectifier   $v_{in} (t)$ is periodic,  it follows from  (\ref{eq:vout(t)-vin(t)}) that $\tilde{v}_{out}(t)$ is also periodic with the same period $T$, i.e., $\tilde{v}_{out}(t)=\tilde{v}_{out}(t+kT)$ for any integer $k$. Moreover,  the average value of $\tilde{v}_{out}(t)$  is zero, i.e., $\tfrac{1}{T}  \int_{T} \tilde{v}_{out}(t) dt=0$. 
Next, by averaging both sides of (\ref{eq:vout(t)-vin(t)}) over the period $T$, we obtain 
\begin{eqnarray} \label{eq:vout(t)-vin(t)-int}
\lefteqn {I_0 \left(\dfrac{1}{T} \exp\Big(-\dfrac{\overline{v}_{out}}{\eta V_0}\Big) \int_{T}  \exp\Big(\dfrac{v_{in}(t)-\tilde{v}_{out}(t)}{\eta V_0}\Big)dt  -1\right) } \nonumber \\ &&  \hspace{-1cm}  { - I_{BV} \left(\dfrac{1}{T} \exp\Big(\dfrac{\overline{v}_{out} - V_{B} }{\eta V_0}\Big) \int_{T}  \exp\Big(\dfrac{-v_{in}(t)+\tilde{v}_{out}(t) }{\eta V_0}\Big)dt \right)} \nonumber \\  & =&   \dfrac{C}{T} \int_{T} \dfrac{d \tilde{v}_{out}(t)}{dt} dt + \dfrac{1}{T R_L} \int_{T}  \big(\overline{v}_{out}+\tilde{v}_{out}(t)\big) dt \nonumber \\
& = &  C\big(\tilde{v}_{out}(T) - \tilde{v}_{out}(0)\big)+ \dfrac{1}{T R_L} \big(\ T \overline{v}_{out} +0\big) \nonumber \\  & = &  \dfrac{\overline{v}_{out}}{R_L}.
\end{eqnarray}
By assuming that the capacitance $C$ of the LPF is sufficiently large\footnote{It should be noted that to keep  the output DC voltage $\overline{v}_{out}$ constant over time, the capacitance of the LPF should be sufficiently large such that $C R_L \gg T$, which can be explained as follows. 	When the diode is reversely biased,  the output voltage is governed by the discharging law of the capacitor of the LPF, and is proportional to  $\exp(-\tfrac{t}{C R_L})$. 	In this case, by setting e.g.  $C=50 T/R_L$,  the normalized output voltage fluctuation (i.e., divided by the peak of the output voltage) is obtained as $1-\exp(-0.02)=1.98 \%$, which is reasonably small as practically desired. }, the output AC voltage of the rectifier is small, i.e.,   $\tilde{v}_{out}(t)\approx 0$ \cite{Clerckx:2016b}.   
Hence,  (\ref{eq:vout(t)-vin(t)-int}) can be simplified as   
\begin{multline} \label{eqxx}  
\exp\Big(\dfrac{\overline{v}_{out}}{\eta V_0}\Big)\left(1+\dfrac{\overline{v}_{out}}{R_L I_0}\right) =\dfrac{1}{T} \int_{T}  \exp\Big(\dfrac{v_{in}(t)}{\eta V_0}\Big) dt \\- \dfrac{1}{T}  \left(\frac{I_{BV}}{I_0} \right) \exp\Big(\dfrac{2\overline{v}_{out} - V_{B} }{\eta V_0}\Big) \int_{T}  \exp\Big(\dfrac{-v_{in}(t)}{\eta V_0}\Big) dt. 
\end{multline}
Since $ v_{in}(t) = \sqrt{R_s}y(t)$ and $ \dfrac{1}{T} \int_{T} \exp\Big(\dfrac{\sqrt{R_s}y(t)}{\eta V_0}\Big) dt = \dfrac{1}{T} \int_{T} \exp\Big(\dfrac{-\sqrt{R_s}y(t)}{\eta V_0}\Big) dt $, \eqref{eqxx} can be further simplified as
\begin{align} \label{eqyy}  
\frac{\exp\Big(\dfrac{\overline{v}_{out}}{\eta V_0}\Big)\left(1+\dfrac{\overline{v}_{out}}{R_L I_0}\right)}{1- \left(\frac{I_{BV}}{I_0} \right)\exp\Big(\dfrac{2\overline{v}_{out} - V_{B} }{\eta V_0}\Big)}
&=\dfrac{1}{T} \int_{T} \exp\Big(\dfrac{\sqrt{R_s}y(t)}{\eta V_0}\Big) dt. 
\end{align}
In the sequel of this paper, we aim to design the multisine power waveform based on the new equation  derived in \eqref{eqyy}, which essentially captures the relationship between the rectenna output DC voltage and the incident RF signal.

Since the right hand side (RHS) of \eqref{eqyy} is non-negative, we should  have ${\overline{v}_{{out}}} <  {\frac{1}{2} \eta V_0 \ln\Big(\frac{I_0}{I_{BV}}\Big) + \frac{V_{B}}{2}}$. Accordingly, the maximum attainable DC voltage, denoted by $\overline{v}_{{out}}^\ast$, is given by
\begin{align} \label{denominator}
\overline{v}_{{out}}^\ast = {\frac{1}{2} \eta V_0 \ln\Big(\frac{I_0}{I_{BV}}\Big) + \frac{V_{B}}{2}}. 
\end{align}
As shown in Fig. \ref{fig3:first}, the total voltage applied to the diode is equal to $  v_{in}(t) + v_{sb} $, where $ v_{sb} = \frac{1}{T} \int_{T} v_{d}(t)   dt = \frac{1}{T} \int_{T} (v_{in}(t) - v_{out}(t))  dt \approx -\overline{v}_{out}$, is the mean value of the signal  (or the self-bias voltage) across the diode. Thus, when $ \overline{v}_{out} $ increases, $ v_{sb} $ will shift towards $-V_B$. Once the amplitude of the signal across the diode reaches the diode breakdown voltage $ V_B $, a significant amount  of the reverse current will start to flow through the diode in the negative cycles of the input signal, and the reverse breakdown occurs. In this case, regardless of any increase in the input signal, $ \overline{v}_{out} $ will be fixed to its maximum value  $\overline{v}_{{out}}^\ast$. It should be noted that $ \frac{1}{2} \eta V_0 \ln\Big(\frac{I_0}{I_{BV}}\Big)  $ in \eqref{denominator} typically takes a small negative value as  $ I_0  \ll I_{BV} $ and $ V_0 \approx 26$ mV at the room temperature. Thus, we have 
\begin{align}
\overline{v}_{{out}}^\ast \approx \frac{V_B}{2}.
\end{align} 
Accordingly, it is not difficult to show that any power waveform  that satisfies
\begin{align}
v_{in}(t)=\sqrt{R_s}y(t) &\leq \frac{V_B}{2}, \label{saturation}
\end{align}
for $0\geq t \geq T$, guarantees that it does not exceed the diode reverse breakdown voltage. Moreover, the diodes used in rectenna circuits usually feature very low barrier voltage  ($ V_{bi} $ in Fig. \ref{fig3:first}), which leads  to a desired  biasing behaviour at very small input power levels. However, such devices have very low values for $ V_B $, which  limits the maximum output power  \cite{diode_3}.

It can be seen from \eqref{eqyy} that its RHS, denoted by \vspace{-1mm}
\begin{equation} \label{rhs}
\Psi_{\mathrm{RHS}}(\{\mathbf{ \tilde s}_m\}) = {  \dfrac{1}{T} \int_{T} \exp\left(\dfrac{\sqrt{R_s} y(t) }{\eta V_0}\right) dt },
\end{equation}
is a function of $\{\mathbf{ \tilde s}_m\}_{m =1}^M$, while its left hand side (LHS) is denoted by \vspace{-1mm}
\begin{equation} \label{lhs}
\Psi_{\mathrm{LHS}}(\overline{v}_{out}) = \frac{\exp\left(\dfrac{\overline{v}_{out}}{\eta V_0}\right)\left(1+\dfrac{\overline{v}_{out}}{R_L I_0}\right)}{1- \left(\frac{I_{BV}}{I_0} \right)\exp\left(\dfrac{2\overline{v}_{out} - V_{B} }{\eta V_0}\right)},
\end{equation}
as a function of $ \overline{v}_{out} $. 
Moreover, $\Psi_{\mathrm{LHS}}(\overline{v}_{out}) $ is monotonically non-decreasing for $\overline{v}_{out} \in [0,~\overline{v}_{out}^\ast]$. Thus,  maximizing $ \overline{v}_{out} $ is equivalent to maximizing $\Psi_{\mathrm{RHS}}(\{\mathbf{ \tilde s}_m\}) $ by optimizing $\{\mathbf{ \tilde s}_m\}_{m =1}^M$.  Then, with the optimized $\{\mathbf{ \tilde s}_m\}_{m =1}^M$, we can obtain the corresponding $\overline{v}_{out}$ by using a simple  bisection search, which is omitted for brevity.

\subsection{Performance Validation }\label{Section:Performance characterization}

To validate and  draw  insights into  the nonlinear rectenna model derived  in \eqref{eqyy}, a circuit simulation was carried out, and the results are illustrated in Fig. \ref{fig:sim}. By varying the RF  power of the received signal $y(t)$  and the  number of frequency tones $ N $,  output DC power was calculated with $N=U$ (equal power allocation over all frequency tones is employed under real valued channel parameters with unit magnitudes).  The diode parameters follow the values in HSMS-285x datasheet \cite{datasheet}. Some important observations can be made from Fig. \ref{fig:sim}:
\begin{enumerate}  
	\item The output power is  monotonically non-decreasing with the input  power, as expected. 
	\item The nonlinear rectifier gain generally increases with  $ N $ for a fixed input power.
	\item As $ N $ increases, the output  power increases at low input power values. This is due to the greater ability of  multi-sine waveforms to overcome the built-in voltage  $V_{bi}$ of the diode as compared to the single-tone waveform (see Fig. \ref{fig3:first}). 
	\item For larger $ N $, the output power saturation is reached with smaller input RF power. 
	\item The nonlinear models given in \cite{Clerckx:2016b,7932461,8008785,9204454,8470248,moghadam2017waveform,jsac_waveform,9096618} follow a similar curve. However, they fail to capture the power saturation effect. 
\end{enumerate}
It is worth noting that in  \cite{Boshkovska:2015}, the authors have shown  a nonlinear model for wireless energy harvesting with similar  monotonically non-decreasing  and  power saturation effects as those shown in Fig. \ref{fig:sim}, but without the analytical results given in \eqref{eqyy}.  

\begin{figure}[t!] 
	\centering
	\includegraphics[trim = 0mm 0mm 0mm 0mm, clip,width=0.75\linewidth]{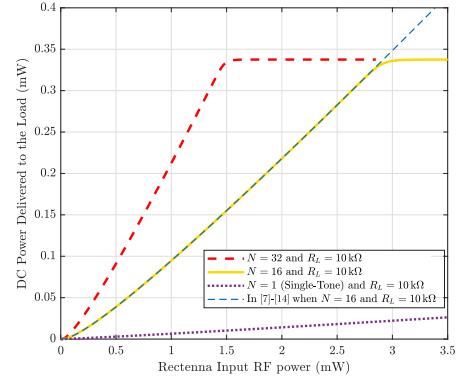}
	\caption{The behaviour of the  nonlinear rectenna model given in \eqref{eqyy}.}  
	\label{fig:sim} 
\end{figure}

\section{Single-ER Waveform Optimization} \label{Sec:single-RX WFormOPT}

In this section, we consider the special case  with a single ER and study the power  waveform optimization problem  in order to draw essential insights into the optimal waveform design.  For brevity, the ER index $k$ is omitted in this section.

\subsection{Problem Formulation} \label{sec:Single-ER_PF}

Since the ER power saturation does not adversely effect for the output DC power of a single-ER system, we do not consider the constraint in \eqref{saturation}. Accordingly, we optimize  $ \{\mathbf{ \tilde s}_m\}_{m =1}^M$ to maximize the output  DC voltage at the single ER, subject to the  maximum transmit sum-power and the  subcarrier selection constraints. The problem is formulated as 
\begin{align} 
\mathrm{(P0)}:  
\mathop{\mathtt{max}}_{\{ \mathbf{ \tilde s}_m \}}~& \Psi_{\mathrm{RHS}}(\{\mathbf{ \tilde s}_m\})  \label{eq:P0_Obj} \\
\mathtt{s.t.}~& \sum_{m =1}^M \|\mathbf{ \tilde s}_m\|^2_2 \leq P_T, \label{eq:P0_C1} \\
~& \|\mathbf{ \tilde s}_m\|_0 \leq N,  m =1,\dots, M. \label{eq:P0_C2}
\end{align}
In (P0),  the objective function and the constraint in (\ref{eq:P0_C1}) are both convex functions  over $\mathbf{ \tilde s}_m$'s. However, maximizing a convex function over a convex set is non-convex in general. Moreover, the cardinality constraints in \eqref{eq:P0_C2} that select  $N$ out of $U$ subcarriers at each transmit antenna $m$, is non-convex and discontinuous, which makes (P0) more difficult to solve. Nevertheless, it is easy to show  that at the optimal solution to (P0), all $\cos(\cdot)$ terms in \eqref{eq:y(t)} should  be added constructively. As such, we need to set $\phi_{m,u}=-\psi_{m,u}$,  $\forall m,u$. With the above optimal phase design, we can focus on optimizing the sinewave amplitudes and subcarrier selection   by considering  the following equivalent  problem of (P0).
\begin{align} 
\mathrm{(P1)}:  
\mathop{\mathtt{max}}_{\{ \mathbf{  s}_m \}}~& \Psi_{\mathrm{RHS}}(\{\mathbf{  s}_m\})  \label{eq:P1_Obj} \\
\mathtt{s.t.}~& \sum_{m =1}^M \|\mathbf{  s}_m\|^2_2 \leq P_T, \label{eq:P1_C1} \\
~& \|\mathbf{  s}_m\|_0 \leq N,  m =1,\dots,M, \label{eq:P1_C2}
\end{align}
where $ \mathbf{  s}_m = [ s_{m,1},\dots, s_{m,U}]^T  $ denotes the sinewave amplitudes over $ U $ subcarriers at antenna $m$, with  $s_{m,u}\ge 0,\forall u \in \{1,\dots,U\}$. In the following, we propose a two-step algorithm  for finding a suboptimal  solution to problem (P1). Specifically, we first consider the subcarrier selection problem, and then investigate the  sinewave amplitude optimization with fixed subcarrier selection.

\subsection{Subcarrier Selection  Optimization} \label{sub_allocation}

Let $p_u\stackrel{\vartriangle}{=}\sum_{m = 1}^M s_{m,u}^2$,  with $\sum_{u = 1}^U p_u \le P_T$, denote the total power allocated over all transmit antennas at subcarrier $u$ subject to the subcarrier selection constraints in  \eqref{eq:P1_C2}. 
\begin{lemma} \label{Lemma:Spt-MRT}
	For any given  $p_u$'s at different subcarriers,  the optimal sinewave amplitude design over transmit antennas is given by $s_{m,u}= \frac{\sqrt{p_u}  h_{m,u}}{\sqrt{\sum_{m =1 }^M h_{m,u}^2}}$, $\forall m$. 
\end{lemma}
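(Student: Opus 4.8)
The plan is to exploit the fact that, once the phases are aligned optimally as prescribed just before (P1), i.e. $\phi_{m,u}=-\psi_{m,u}$, the objective $\Psi_{\mathrm{RHS}}(\{\mathbf s_m\})$ in \eqref{rhs} depends on the amplitudes only through the $U$ aggregate quantities $A_u\triangleq\sum_{m=1}^M h_{m,u}s_{m,u}\ge 0$. Substituting the aligned phases into \eqref{eq:y(t)} collapses the incident signal to $y(t)=\sum_{u}\sqrt{2}\,A_u\cos(w_u t)$, so that $\Psi_{\mathrm{RHS}}=\frac1T\int_T\exp\!\big(\tfrac{\sqrt{R_s}}{\eta V_0}\sum_u\sqrt{2}\,A_u\cos(w_u t)\big)\,dt=:g(A_1,\dots,A_U)$. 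Since the per-subcarrier powers $p_u=\sum_m s_{m,u}^2$ are held fixed and the constraint $\sum_m s_{m,u}^2=p_u$ decouples across $u$, the problem reduces to the following: for each $u$ independently, choose $\{s_{m,u}\}_m$ on the sphere of radius $\sqrt{p_u}$ so as to drive $g$ as high as possible.

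First I would prove the monotonicity statement that makes this reduction legitimate: $g$ is nondecreasing in each $A_u$ on the nonnegative orthant. The clean way to see this is to expand $g$ as a power series, $g=\sum_{n\ge0}\frac{1}{n!}\big(\tfrac{\sqrt{2R_s}}{\eta V_0}\big)^n\frac1T\int_T\big(\sum_u A_u\cos(w_u t)\big)^n\,dt$, and to note that each time average $\frac1T\int_T\prod_i\cos(w_{u_i}t)\,dt$ is \emph{nonnegative}. This follows from the product-to-sum identity, which writes the product of $n$ cosines as $2^{-n}$ times a sum of terms $\cos\!\big((\pm w_{u_1}\pm\cdots\pm w_{u_n})t\big)$; because each $w_u$ is a positive integer multiple of $2\pi/T$ (recall $f_u/\Delta_u$ is an integer and $T=1/\Delta_u$), every signed frequency combination is also an integer multiple of $2\pi/T$, so the time average of each such term is either $0$ or, when the signed sum vanishes, $1$, hence $\ge0$. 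Consequently $g$ is a power series in $(A_1,\dots,A_U)$ with only nonnegative coefficients, and since every $A_u\ge0$ it is coordinatewise nondecreasing, as claimed.

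With monotonicity in hand, the remaining step is routine. For fixed $p_u$, the Cauchy--Schwarz inequality gives $A_u=\sum_m h_{m,u}s_{m,u}\le\sqrt{\sum_m h_{m,u}^2}\,\sqrt{\sum_m s_{m,u}^2}=\sqrt{p_u\sum_m h_{m,u}^2}$, with equality precisely when $s_{m,u}\propto h_{m,u}$; imposing $\sum_m s_{m,u}^2=p_u$ fixes the proportionality constant and yields $s_{m,u}=\sqrt{p_u}\,h_{m,u}/\sqrt{\sum_m h_{m,u}^2}$. Because $h_{m,u}\ge0$, this maximizer automatically satisfies $s_{m,u}\ge0$ and is therefore feasible. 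Since the power budgets are separable across subcarriers, this choice simultaneously attains the largest possible $A_u$ for \emph{every} $u$; by the coordinatewise monotonicity of $g$, it maximizes $\Psi_{\mathrm{RHS}}$ over all amplitude allocations consistent with the given $\{p_u\}$, which is the assertion of the lemma.

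The step I expect to be the main obstacle is the monotonicity claim. A naive attempt to show $\partial g/\partial A_v\ge0$ directly is unpromising, since $\cos(w_v t)$ changes sign over a period and no simple time-shift symmetry relates $g(A_v)$ to $g(-A_v)$ when the tones are not harmonically nested. The power-series and product-to-sum argument above is what circumvents this: it replaces a sign analysis of an integral by the manifestly nonnegative combinatorics of which signed frequency combinations cancel. Everything else---the phase alignment, the collapse to the $\{A_u\}$, and the Cauchy--Schwarz optimization---is standard.
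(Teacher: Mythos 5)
Your proof is correct, but it takes a genuinely different route from the paper's. The paper argues via the KKT stationarity condition of the Lagrangian: differentiating with respect to $s_{m,u}$ shows that the time-integral factor is common to all antennas $m$ at a given subcarrier $u$, whence $s_{m,u}/s_{m',u}=h_{m,u}/h_{m',u}$ and the MRT form follows from the definition $p_u=\sum_m s_{m,u}^2$. You instead collapse the objective onto the aggregate variables $A_u=\sum_m h_{m,u}s_{m,u}$, prove that $\Psi_{\mathrm{RHS}}$ is coordinatewise nondecreasing in $(A_1,\dots,A_U)$ on the nonnegative orthant (via the power-series expansion and the observation that every time-averaged product of cosines is nonnegative because each $f_u/\Delta_u$ is an integer), and then maximize each $A_u$ separately by Cauchy--Schwarz. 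The trade-off is as follows: the paper's argument is shorter and produces the answer in one stroke, but since (P1) maximizes a convex function over a convex set, KKT stationarity is only a necessary condition, so that proof establishes the \emph{form} of any candidate optimum rather than certifying global optimality (and its claim of uniqueness of the solution to the stationarity equation is asserted rather than proved). Your argument supplies a genuine global-optimality certificate and isolates the one nontrivial ingredient---the monotonicity of the objective in the per-subcarrier aligned amplitudes, which depends on the harmonic relation among the subcarrier frequencies and also underpins the frequency-MRT heuristic later in the paper. Two minor points: the $n=2$ term of your series already gives \emph{strict} monotonicity, which would let you claim the maximizer is unique (up to subcarriers with $b_u=0$, where the stated formula degenerates and any allocation of $p_u$ is equivalent); it is worth flagging both explicitly.
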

\begin{proof}
Let $\xi \ge 0$ denote the Lagrange dual variable corresponding to (\ref{eq:P1_C1}). Since the given power allocations $\{p_u\}_{u=1}^U$ are subject to the subcarrier selection, we omit the constraints in  \eqref{eq:P1_C2}. The Lagrangian of (P1) is thus given by
\begin{multline} 
\hspace{-2mm}\mathcal{L}= \dfrac{1}{T} \int_{T} \exp\Big(\dfrac{\sqrt{2R_{s}}  \sum_{m=1}^{M} \sum_{u=1}^{U}  h_{m,u} s_{m,u} \cos (w_u t)}{\eta V_0}\Big) dt \\ - \xi \left( \sum_{m=1}^{M} \sum_{u=1}^{U} s_{m,u}^2- P_T \right).
\end{multline}
One of the Karush–Kuhn–Tucker (KKT) conditions to (P1) is that the derivative of the  Lagrangian with respect to the primal variable $ s_{m,u} $ must be zero at the optimal point, i.e.,
\begin{multline}
\dfrac{\sqrt{2R_s}}{T\eta V_0} h_{m,u}  \int_{T} \cos (w_u t) \exp\Big(\dfrac{ \splitfrac{\sqrt{2R_{s}}  \sum_{m=1}^{M} \sum_{u=1}^{U} }{  h_{m,u}  s_{m,u} \cos (w_u t)}} {\eta V_0}\Big) dt \\ - 2 \xi  s_{m,u}=0,~\forall m,u. \label{eq:KKT-S1-1}
\end{multline}
For any given $u$, as the integral term is common for different antenna $m$, it follows that   $s_{m,u}/s_{{m'},u}=h_{m,u}/h_{{m'},u}$ at $f_u$. 
Let $p_u \ge 0$ denote the total power allocated over all transmit antennas at  $f_n$, i.e., $p_u=\sum_{m=1}^M s_{m,u}^2$,  which is subject to  $\sum_{u =1}^U p_u \le P_T$. Accordingly, for each frequency tone $u$, it follows that $s_{m,u}= \frac{\sqrt{p_u}  h_{m,u}}{\sqrt{\sum_{m =1}^M h_{m,u}^2}} $, $\forall m$, is the \textit{unique} solution to (\ref{eq:KKT-S1-1}). The proof  is thus completed.	
\end{proof}

It is also inferred from the above proof  that at each subcarrier, the optimal power allocation over  transmit  antennas follows the well-known   maximal ratio transmission (MRT) principle \cite{viswanathtse}. We thus term the solution given in Lemma \ref{Lemma:Spt-MRT} as \textit{spatial-MRT}. 
For  convenience, we introduce a change of variable as
$\mathbf x = [x_1,\dots,x_U]^T $, with $x_u=\sqrt{p_u}$,  $\forall u \in \{1,\dots,U\}$. With the spatial-MRT solution, \eqref{eq:P1_Obj} can be  equivalently rewritten  as
\begin{equation} \label{mrt_obj}
 \Psi_{\mathrm{RHS}}(\mathbf{x}) = \dfrac{1}{T} \int_{T} \exp\Bigg(\dfrac{ {\sqrt{2R_{s}}  \sum_{u =1}^U  b_{u} x_u }{ \cos (w_u t)}}{\eta V_0}\Bigg) dt,
\end{equation}
where  $b_u = {\sqrt{\sum_{m =1}^M h_{m,u}^2}}, \forall u$. From \eqref{mrt_obj}, it is observed that $b_u$ is the effective channel gain for the $u$-th subcarrier  over all transmit  antennas. The optimal subcarrier selection  is thus given in the following proposition (for which the proof is straightforward and thus omitted for brevity).
\begin{proposition} \label{sub_allocation_pro}
Let $\mathbf b = [b_1,\dots,b_U]^T$ with $b_u = {\sqrt{\sum_{m =1}^M h_{m,u}^2}}, \forall u$. Then, the index set of the optimal subcarrier selection solution to (P1), denoted by $\mathcal {U_N} $, is given by the indices of the $N$  largest components of $\mathbf b$.
\end{proposition}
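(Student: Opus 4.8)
The plan is to argue by a greedy exchange, exploiting that after the spatial-MRT reduction of Lemma~\ref{Lemma:Spt-MRT} the objective \eqref{mrt_obj} depends on the selected subcarriers only through the \emph{effective} amplitudes $a_u \defeq b_u x_u$, whereas the power budget \eqref{eq:P1_C1} charges subcarrier $u$ a cost $x_u^2 = a_u^2/b_u^2$. First I would make the change of variables $a_u = b_u x_u$, rewriting \eqref{mrt_obj} as $\Psi_{\mathrm{RHS}} = \tfrac{1}{T}\int_T \exp\!\big(\tfrac{\sqrt{2R_s}}{\eta V_0}\sum_u a_u\cos(w_u t)\big)\,dt$ subject to $\sum_u a_u^2/b_u^2 \le P_T$ and at most $N$ nonzero $a_u$. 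In these variables a larger $b_u$ makes subcarrier $u$ strictly \emph{cheaper}: a prescribed effective amplitude $a_u$ is realized with the smaller power $a_u^2/b_u^2$.

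The key structural fact I would then record is that the time average in \eqref{mrt_obj} is a symmetric, coordinatewise non-decreasing function of the effective amplitudes $\{a_u\}$; in particular, its value is unchanged if the coefficient carried by one tone is moved onto another, unused, tone while its magnitude is preserved. This is the standard multisine property that the DC component of the exponential factorizes across tones into a product of per-tone averages $\tfrac{1}{T}\int_T \exp(\tfrac{\sqrt{2R_s}}{\eta V_0}a_u\cos w_u t)\,dt$ (modified Bessel functions of the first kind). The coordinatewise monotonicity is already implicit in the KKT stationarity relation \eqref{eq:KKT-S1-1}, whose left-hand integral is precisely the relevant partial derivative and is nonnegative at the optimum.

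With these two facts the exchange step is immediate. Suppose an optimal selection $\mathcal S$ of size $N$ omits a subcarrier $v$ whose gain exceeds that of some selected $u\in\mathcal S$, i.e.\ $b_v>b_u$. Take the optimal amplitudes on $\mathcal S$ and move the coefficient off $u$ onto $v$, setting $a_v = a_u$ (equivalently $x_v = (b_u/b_v)x_u$) and $x_u=0$. By the structural fact the objective is unchanged, while the expended power drops from $a_u^2/b_u^2$ to $a_u^2/b_v^2 < a_u^2/b_u^2$; the freed budget can be reinvested to strictly increase some $a_w$, and hence $\Psi_{\mathrm{RHS}}$, contradicting optimality of $\mathcal S$ unless $b_v=b_u$. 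Iterating this swap forces an optimal support to consist of the $N$ subcarriers with the largest $b_u$, which is exactly $\mathcal{U}_N$; allowing a selected tone to receive zero power also covers the case in which fewer than $N$ tones are ultimately active.

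The step I expect to be the only real obstacle is the structural fact in the second paragraph, namely that for the equally spaced (hence harmonically related) subcarriers $w_u = 2\pi f_u$ the DC value of $\exp(\sum_u a_u\cos w_u t)$ depends on the active tones only through their amplitudes and not on which specific frequencies carry them. This holds exactly for incommensurate tones (giving the product form) and is the appropriate approximation in the narrow-band regime of interest, where the cross-tone intermodulation contributions to the DC level are negligible; everything else is a routine greedy argument together with a one-line power-accounting inequality.
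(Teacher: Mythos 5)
Your proof rests on a ``structural fact'' that is false in the setting of this paper --- and note that the paper gives you no cover here, since it declares the proof of Proposition \ref{sub_allocation_pro} straightforward and omits it entirely, so the selection step is yours to justify. The subcarriers are commensurate by construction: $f_u=f_0+(u-1)\Delta_u$ with $f_0/\Delta_u$ an integer, and the averaging period is $T=1/\Delta_u$, so every $w_u$ is an integer multiple of $2\pi\Delta_u$. Writing $\exp(a\cos\theta)=\sum_{k\in\Z}I_k(a)e^{jk\theta}$, the time average in \eqref{mrt_obj} equals $\sum\prod_u I_{|k_u|}(a_u)$, the sum running over all integer vectors $(k_u)$ with $\sum_u k_u f_u=0$. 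Only for incommensurate tones does this collapse to the product $\prod_u I_0(a_u)$; here the nontrivial relations (e.g., $f_{u}-2f_{u+1}+f_{u+2}=0$ for three consecutively indexed selected tones) contribute strictly positive terms whose presence and order depend on \emph{which} frequencies are active, not merely on the multiset of effective amplitudes. The objective is therefore not symmetric under relabelling of tones, and your exchange step --- ``move the coefficient from $u$ to $v$, the objective is unchanged'' --- fails: the value can decrease if $v$ participates in fewer low-order integer relations with the remaining selected tones than $u$ did.

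Nor can this be dismissed as a narrow-band approximation. The cross-tone intermodulation products you discard are precisely the mechanism by which multisine waveforms raise the rectified DC output (the effect Fig.~\ref{fig:sim} and the entire waveform-design exercise exploit); under your product-of-Bessel surrogate, $\log\prod_u I_0(a_u)\approx\sum_u a_u^2/4$ at low amplitudes depends only on total received power, the multisine gain vanishes, and you would be proving a statement about a different objective. The salvageable pieces of your argument are the cost accounting (realizing effective amplitude $a_u$ costs $a_u^2/b_u^2$, so a larger $b_u$ is cheaper) and the coordinatewise monotonicity read off from the stationarity condition \eqref{eq:KKT-S1-1}. What is missing is any argument that the gain advantage of swapping in a higher-$b_u$ tone dominates the change in the intermodulation structure of the selected set; that comparison is exactly where a real proof of Proposition \ref{sub_allocation_pro} must do its work, and the claim is genuinely delicate when two candidate tones have nearly equal $b_u$ but very different additive relations to the rest of the selection.
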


\subsection{Sinewave Amplitude Optimization} \label{amp_opt}

We now consider the sinewave amplitude optimization with selected    subcarriers. Let $ \underline{\mathbf b}=[\underline b_1,\dots,\underline b_U]^T $, with 
\begin{align} 
\underline b_u &= 
\begin{cases} 
{\sqrt{\sum_{m =1}^M h_{m,u}^2}} & \text{if  } ~~u\in \mathcal {U_N} \\
0      & \text{otherwise  }
\end{cases},
\end{align}
and 
\begin{equation} \label{mrt_obj22}
\underline{\Psi}_{\mathrm{RHS}}(\mathbf{x}) = \dfrac{1}{T} \int_{T} \exp\Bigg(\dfrac{ {\sqrt{2R_{s}}  \sum_{u =1}^U  \underline b_{u} x_u }{ \cos (w_u t)}}{\eta V_0}\Bigg) dt. \nonumber
\end{equation}
We  equivalently rewrite (P1) in terms of $\mathbf x$ as  follows. 
\begin{align} 
\mathrm{(P1-EQ)}:  
\mathop{\mathtt{max}}_{ \mathbf{x} }~& \underline \Psi_{\mathrm{RHS}}(\mathbf{x})  \label{eq:P1_Simpl_Obj} \\
\mathtt{s.t.}~& \|\mathbf{x}\|^2_2 \leq P_T. \label{eq:P1_Simpl_C1}
\end{align}
The problem (P1--EQ) is non-convex and difficult to be optimally solved in general, for which we propose two efficient  suboptimal  solutions.

\subsubsection{SCP-based Solution} \label{sec:SCP_Sol} 
SCP is an iterative method to solve the problem of maximizing a convex  function (thus non-convex optimization problem) suboptimally. Specifically, at each iteration  $l=1,2,\ldots$, we approximate  the objective function in (P1--EQ) by a linear function using  its first-order Taylor series to form a convex optimization problem. Next, we set the values of decision variables $x_{u}$'s  for iteration $l+1$ as the optimal solution to the approximate problem at iteration $l$. The algorithm continues until a given stopping criterion is satisfied. In the following, we provide details of our SCP based algorithm for (P1--EQ).

Let $x_{u}^{(l)}$, $\forall u $, denote the  variables of (P1--EQ) at the beginning of iteration $l$. We approximate the objective function in (\ref{eq:P1_Simpl_Obj}) via its first-order Taylor series as  
\begin{align} \label{eq:Approx_Obj_P1}
 \underline\Psi_{\mathrm{RHS}}^{(l)}(\mathbf{x}) = \beta_{0}^{(l)} +\sum_{u=1}^U \beta_{u}^{(l)} \left(x_{u}-x_{u}^{(l)} \right),
\end{align}
with the coefficients given by 
\begin{align}
\beta_{0}^{(l)}&=\dfrac{1}{T} \int_{T} z^{(l)}(t)  dt, \label{eq:beta0}\\ 
\beta_{u}^{(l)}&=\dfrac{1}{T} \int_{T} \dfrac{\sqrt{2R_s}}{\eta V_0} \underline b_{u} \cos(w_u t) z^{(l)}(t) dt, \forall u, \label{eq:betan}
\end{align}
where  $z^{(l)}(t) \triangleq   \exp\big(\sqrt{2R_{s}}  \sum_{u=1}^U \underline b_{u} x_{u}^{(l)}  \cos (w_u t)/ (\eta V_0)\big)$. 
Since  the objective function in (\ref{eq:P1_Simpl_Obj}) is convex over $x_{u}$'s, the linear approximation given in (\ref{eq:Approx_Obj_P1}) is  its {global} under-estimator \cite{202}. The integrals in (\ref{eq:beta0}) and (\ref{eq:betan}) can be computed by using numerical integration techniques such as Newton-Cotes formula   \cite{Ube:1997}.
 
Next, we present the approximate problem of (P1--EQ) for each iteration $l$ as follows. 
\begin{align} 
\mathrm{(P1-EQ-\mathnormal{l})}:  
\mathop{\mathtt{max}}_{\mathbf x}~& \underline\Psi_{\mathrm{RHS}}^{(l)}(\mathbf{x}) \label{eq:P1-l_Obj} \\
\mathtt{s.t.}~& \|\mathbf x\|_2^2 \leq P_T. \label{eq:P1-l_C1} 
\end{align}
(P1--EQ$-\mathnormal{l}$) is a convex  quadratically constrained linear programming (QCLP), where the optimal solution is given by $x_{u}=  \sqrt{P_T}  {\beta}_{u}^{(l)}/{\beta}^{(l)}$, $\forall u$, with  ${\beta}^{(l)}=  \sqrt{\sum_{u=1}^U ({\beta}_{u}^{(l)})^2}$. Next, we set $x_{u}^{(l+1)}$'s using the solution of (P1--EQ$-\mathnormal{l}$).  
Accordingly, we compute ${\beta}_{0}^{(l+1)}$, and update  $\Delta_{{\beta}_{0}}=|{\beta}_{0}^{(l+1)}- {\beta}_{0}^{(l)}|/{\beta}_{0}^{(l)}$. 
Let $\epsilon>0$ denote a given stopping threshold. 
If $\Delta_{{\beta}_{0}} \le \epsilon$, then the algorithm terminates. Otherwise, if $\Delta_{{\beta}_{0}} >\epsilon$, then the algorithm will continue to the next iteration. The above iterative algorithm  is thus  named SCP-QCLP. Note that this algorithm cannot  guarantee to obtain the globally  optimal solution of  (P1--EQ) in general,  but can yield a point fulfilling its KKT conditions  \cite{202}.  Hence,  it will return (at least) a locally optimal solution to (P1--EQ). 

\subsubsection{Frequency-MRT} \label{sec:MRC_Sol}
Let $z(t)=\exp(\sqrt{2R_{s}}  \sum_{u=1}^U \underline b_{u} x_u \cos (w_u t)/(\eta V_0))$. One approximate way to solve (P1--EQ) is  by replacing the integral in (\ref{eq:P1_Simpl_Obj})  by the peak value of its integrand $z(t)$  over the period $T$, which is given by   $z(0)=\exp(\sqrt{2 R_{s}} \sum_{u=1}^U \underline b_{u} x_{u}   /(\eta V_0))$.  
Accordingly, \eqref{eq:P1_Simpl_Obj} is approximated as 
\begin{equation} \label{mrt_obj2}
\underline \Psi_{\mathrm{RHS}}(\mathbf{x}) \approx  \exp\Bigg(\dfrac{ {\sqrt{2R_{s}}  \sum_{u =1}^U \underline b_{u} x_u }{ }}{\eta V_0}\Bigg) 
\end{equation}
and the optimal  (frequency-MRT) solution to maximize the RHS of \eqref{mrt_obj2} is  given  in the following proposition. The proof is similar to that of  Proposition \ref{Lemma:Spt-MRT} and  thus omitted for brevity.
\begin{proposition} \label{Propostion:1}
	The optimal  solution to maximize \eqref{mrt_obj2} subject to \eqref{eq:P1_Simpl_C1} is  given by $x_{u}=\frac{\sqrt{P_T} \underline b_{u}}{\sqrt{\sum_{u =1}^U \underline b_{u}^2}}$, $\forall u$.
\end{proposition}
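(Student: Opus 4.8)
The plan is to exploit the fact that the optimization variables couple through only a single scalar linear form. First I would observe that $t\mapsto\exp(t)$ is strictly increasing, so maximizing the RHS of \eqref{mrt_obj2} is equivalent to maximizing its exponent; and since $\sqrt{2R_s}/(\eta V_0)$ is a fixed positive constant, this is in turn equivalent to maximizing the linear functional $\sum_{u=1}^U \underline b_u x_u = \underline{\mathbf b}^T \mathbf x$ over the feasible set $\|\mathbf x\|_2^2 \le P_T$ (together with the implicit amplitude nonnegativity $x_u\ge 0$ inherited from $x_u=\sqrt{p_u}$). This reduces the task to maximizing an inner product over a Euclidean ball, whose solution is classical.

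Next I would invoke the Cauchy--Schwarz inequality to bound the objective, $\underline{\mathbf b}^T \mathbf x \le \|\underline{\mathbf b}\|_2\,\|\mathbf x\|_2 \le \sqrt{P_T}\,\|\underline{\mathbf b}\|_2$, where the first inequality is Cauchy--Schwarz and the second uses the power constraint \eqref{eq:P1_Simpl_C1}. Equality in Cauchy--Schwarz holds precisely when $\mathbf x$ is a nonnegative scalar multiple of $\underline{\mathbf b}$, and equality in the second step forces the power constraint to be active, i.e. $\|\mathbf x\|_2=\sqrt{P_T}$. Solving these two conditions jointly yields $\mathbf x = \big(\sqrt{P_T}/\|\underline{\mathbf b}\|_2\big)\,\underline{\mathbf b}$, which is exactly $x_u = \sqrt{P_T}\,\underline b_u/\sqrt{\sum_{u=1}^U \underline b_u^2}$ for all $u$, as claimed.

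Finally I would verify feasibility and optimality: since $\underline b_u\ge 0$ by construction, every proposed $x_u$ is nonnegative, so the candidate lies in the feasible set, and by the tightness of both inequalities it attains the derived upper bound and is therefore globally optimal. Equivalently, mirroring the proof of Lemma~\ref{Lemma:Spt-MRT}, one may form the Lagrangian with multiplier $\xi\ge 0$ for \eqref{eq:P1_Simpl_C1} and set the stationarity condition $\sqrt{2R_s}\,\underline b_u\,\exp(\cdot)/(\eta V_0) - 2\xi x_u = 0$; because the exponential factor is common to all $u$, this forces $x_u \propto \underline b_u$, and the proportionality constant is pinned down by the active power constraint, recovering the same closed form.

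Since the result is essentially a one-line consequence of Cauchy--Schwarz, there is no substantive obstacle; the only points requiring a moment of care are confirming that the power budget must be fully used at the optimum (the objective is strictly increasing along the ray $\lambda\underline{\mathbf b}$, $\lambda>0$, so the norm constraint is necessarily tight) and noting that the $x_u\ge 0$ constraints do not bind given $\underline b_u\ge 0$.
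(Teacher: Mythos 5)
Your proof is correct, and it reaches the stated solution by a route that differs from (and is somewhat stronger than) the one the paper intends. The paper omits the proof of Proposition~\ref{Propostion:1} and says it is ``similar to that of'' Lemma~\ref{Lemma:Spt-MRT}, i.e.\ the Lagrangian/KKT stationarity argument: differentiate, observe the exponential factor is common to all $u$, conclude $x_u\propto \underline b_u$, and fix the constant via the active power constraint --- exactly the alternative you sketch in your final paragraph. Your primary argument instead first reduces the problem, via monotonicity of $\exp(\cdot)$, to maximizing the linear form $\underline{\mathbf b}^T\mathbf x$ over the ball \eqref{eq:P1_Simpl_C1} and then applies Cauchy--Schwarz. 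This buys you a genuine \emph{global} optimality certificate with the equality conditions pinning down the maximizer, rather than only a stationary point of a Lagrangian; it also makes transparent why the power constraint must be tight and why the implicit nonnegativity $x_u\ge 0$ never binds (since $\underline b_u\ge 0$). The KKT route, by contrast, is the one that generalizes to Lemma~\ref{Lemma:Spt-MRT}, where the integral objective does not collapse to a single linear form; here, after the peak-value approximation \eqref{mrt_obj2}, your elementary argument is the cleaner of the two. One cosmetic caveat: for subcarriers with $\underline b_u=0$ the maximizer is unique only because wasting power there would strictly reduce the budget available to the subcarriers with $\underline b_u>0$; your formula correctly assigns them $x_u=0$.
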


\section{Multi-ER Waveform Optimization}  \label{Sec:multi-RX WFormOPT}

In this section, we extend the single-ER power waveform optimization problem (P0) to the general case of multi-ER WPT.  In contrast to (P0), for which the phases of transmitted sinewave signals are  optimally designed to guarantee that all $N$ subtones are added constructively at one single ER, such a phase design  is generally non-optimal for a WPT system that charges  multiple ERs simultaneously, since in general $\tilde h_{1,m,u}\neq \ldots\neq \tilde h_{K,m,u}$, $\forall m,u$. Besides, to flexibly balance the harvested DC  power at different users in the multi-ER  power waveform design, the effect of ER power saturation (or diode reverse breakdown)  is crucial and thus cannot be ignored as in the single-ER case.

\subsection{Problem Formulation}

We aim to maximize the weighted sum of the harvested powers by all ERs  via jointly optimizing the subcarrier selection and amplitudes/phases of selected sub-carriers at all transmitting antennas at the ET. For convenience, we formulate the  optimization  problem in terms of the real and imaginary parts of $\mathbf{\tilde s}_m$'s separately. Specifically, let $\tilde{s}_{m,u}=\bar{s}_{m,u}+j \hat{s}_{m,u}$ and $\tilde{h}_{k,m,u}=\bar{h}_{k,m,u}+j \hat{h}_{k,m,u}$, with $s_{m,u}=\sqrt{\bar{s}_{m,u}^2+\hat{s}_{m,u}^2}$,  $h_{k,m,u}=\sqrt{\bar{h}_{k,m,u}^2+\hat{h}_{k,m,u}^2}$,  $\phi_{m,u}=\arctan(\hat{s}_{m,u}/\bar{s}_{m,u})$, and $\psi_{k,m,u}=\arctan(\hat{h}_{k,m,u}/\bar{h}_{k,m,u})$, $\forall m,u$. The received signal  at ER $k$ (given in (\ref{eq:y(t)}))  is thus re-expressed as 
\begin{align} \label{eq:y(t)_New2}
	y_k(t)= \sum_{m=1}^{M} \sum_{u =1}^{U}\sqrt{2}  \big( \bar{g}_{k,m,u}(t) \bar{s}_{m,u} + \hat{g}_{k,m,u}(t)\hat{s}_{m,u} \big),
\end{align}
where $\bar{g}_{k,m,u}(t)=\bar{h}_{k,m,u}\cos(w_u t)-\hat{h}_{k,m,u}\sin(w_u t)$ and $\hat{g}_{k,m,u}(t)=-(\bar{h}_{k,m,u}\sin(w_u t)+\hat{h}_{k,m,u}\cos(w_u t))$. Let $\Psi_{\mathrm{RHS}_k}(\{\bar{s}_{m,u}\},\{\hat{s}_{m,u}\})  $
denote the RHS of \eqref{eqyy} with $y_k(t)$ given in \eqref{eq:y(t)_New2}.  The optimization problem is thus formulated as
\begin{align} 
\mathrm{(P2)}:  
\mathop{\mathtt{maximize}}_{ \{\bar{s}_{m,u}\},\{\hat{s}_{m,u}\}} \hspace{0.2cm} \sum_{k=1}^{K} \theta_k \Psi_{\mathrm{RHS}_k}(\{\bar{s}_{m,u}\},\{\hat{s}_{m,u}\})\label{eq:P3_Obj} \hspace{-2cm}\\
\mathtt{s.t.} \hspace{4.3cm}&\hspace{-4.3cm}\sum_{m=1}^M \sum_{u=1}^U \bar{s}_{m,u}^2+\hat{s}_{m,u}^2 \le P_T, \label{eq:P3_C1} \\
\hspace{4.3cm}&\hspace{-4.3cm} {\max}\Big\{y_k\big(T(q-1)/Q\big)\Big\}_{q=1}^Q \leq  \frac{V_B}{2\sqrt{R_s}},\forall k, \label{saturation2}\\
\hspace{4.3cm}&\hspace{-4.3cm} \|\underline{\mathbf s}_m\|_0  \le N,  m =1\dots,M, \label{eq:P3_C2}
\end{align}
where $\underline{\mathbf s}_m=[(\bar{s}_{m,1}^2+\hat{s}_{m,1}^2),\dots,(\bar{s}_{m,U}^2+\hat{s}_{m,U}^2)  ]^T \in \mathbb R^{U\times 1}$ and the coefficients $\theta_k$'s  control the harvested power trade-off among different ERs, with $\sum_{k=1}^K\theta_k=1$ and $\theta_k \ge 0$, $\forall k$. The constraint in \eqref{saturation2} follows from \eqref{saturation} and it guarantees that the received power waveform at each ER does not exceed the diode reverse breakdown voltage (so as to avoid the output power saturation shown in Fig. \ref{fig:sim}). Since it is theoretically difficult to handle a continuous-time power waveform, we discretize one period of each ER's received power waveform into equally spaced $Q>0$ samples in constraint \eqref{saturation2}. It is not difficult to observe that (P2) is a non-convex optimization problem due to the non-concave objective function and the cardinality constraint in \eqref{eq:P3_C2} being  non-convex and discontinuous. Thus, solving this optimization problem is challenging  and a low-complexity algorithm for solving (P2) is highly desirable  in practice.

\textit{\underline {Remark 5.1}:} In Section \ref{amp_opt}, we optimized the power waveform for a single-ER WPT system by considering the peak amplitude of the ER received signal (cf. Frequency-MRT). However, in a multi-ER WPT system, the peak amplitude of the received signal at each ER can occur at a different time in general. Besides, the exact occurrence time of the peak amplitude of each ER’s received signal is difficult to determine, since it depends on its CSI and the  power waveform. Hence, the approach adopted  for the single-ER WPT system is difficult to be applied to solve (P2), even sub-optimally.

\subsection{DCP-based Algorithm } \label{tpm}

In view of the intractability of (P2), we propose an
iterative algorithm to solve it by utilizing the DCP technique. To deal with the non-convex cardinality constraint in \eqref{eq:P3_C2}, the Ky Fan $N$-norm \cite{fan1951maximum} is adopted to transform the cardinality constraint into the difference-of-convex-functions (DC) form. Specifically, Ky Fan $N$-norm of $\underline{\mathbf s}_m$  is given by the sum of largest-$N$ absolute values, i.e.,
\begin{align}
\vvvert\underline{\mathbf s}_m\vvvert_N &= \sum_{u=1}^{N} |\underline{ s}_{m,\pi(u)}| \nonumber\\
&= \sum_{u=1}^{N} \bar{s}_{m,\pi(u)}^2+\hat{s}_{m,\pi(u)}^2,
\end{align}
where $\pi$ is a permutation of $\{1,\dots,U\}$ and $|\underline{ s}_{m,\pi(1)}|\geq \cdots \geq |\underline{ s}_{m,\pi(U)}|$. Accordingly, if $\|\underline{\mathbf s}_m\|_0$ is no greater than $N$, $\|\underline{\mathbf s}_m\|_1$ is equal to $\vvvert\underline{\mathbf s}_m\vvvert_N$. Thus, the cardinality constraint in \eqref{eq:P3_C2} can be equivalently expressed as
\begin{align}
\|\underline{\mathbf s}_m\|_1-\vvvert\underline{\mathbf s}_m\vvvert_N =0,m=1,\dots,M. \label{dc}
\end{align} 
Since all the entries of $\underline{\mathbf s}_m$ are non-negative, $\|\underline{\mathbf s}_m\|_1$ is a composition of the non-decreasing function $\|\cdot\|_1$ and convex quadratic functions $\bar{s}_{m,u}^2+\hat{s}_{m,u}^2,\forall m,u$. Therefore, $\|\underline{\mathbf s}_m\|_1$ is convex with respect to $\bar{s}_{m,u}$'s and $\hat{s}_{m,u}$'s. Similarly, we conclude that $\vvvert\underline{\mathbf s}_m\vvvert_N$ is also convex with respect to $\bar{s}_{m,u}$'s and $\hat{s}_{m,u}$'s. As a result, constraint \eqref{dc} is a DC representation of the cardinality constraint in \eqref{eq:P3_C2}.

By replacing \eqref{eq:P3_C2} with  \eqref{dc}, (P2) is equivalently transformed to
\begin{align} 
\mathrm{(P3)}:  
\mathop{\mathtt{maximize}}_{ \{\bar{s}_{m,u}\},\{\hat{s}_{m,u}\}} & \sum_{k=1}^{K} \theta_k \Psi_{\mathrm{RHS}_k}(\{\bar{s}_{m,u}\},\{\hat{s}_{m,u}\}) \\
\mathtt{s.t.} ~\hspace{1cm}&\hspace{-1cm} \|\underline{\mathbf s}_m\|_1-\vvvert\underline{\mathbf s}_m\vvvert_N =0,  m =1\dots,M, \label{eq:P3_C22} \\
~\hspace{1cm}&\hspace{-1cm}\eqref{eq:P3_C1}, \eqref{saturation2}. \nonumber
\end{align}
It is noteworthy that the cardinality constraint of (P2) is defined by a discontinuous function, whereas the DC constraint of (P3) is defined by a continuous function, although both describe the same feasible set. However,  problem (P3) is still non-convex and difficult to be optimally solved in general due to the non-concave objective function and non-convex DC constraint in \eqref{eq:P3_C22}. To overcome such difficulties, we resort to a penalty-based method by adding DC constraint-related penalty terms to the objective function of (P3), yielding the following optimization problem
\begin{align} 
\mathrm{(P4}):  
\mathop{\mathtt{maximize}}_{ \{\bar{s}_{m,u}\},\{\hat{s}_{m,u}\}} \hspace{0.2cm} \sum_{k=1}^{K} \theta_k \Psi_{\mathrm{RHS}_k}(\{\bar{s}_{m,u}\},\{\hat{s}_{m,u}\}) \nonumber\\ &\hspace{-4.1cm} - \mu \sum_{m=1}^M \Big(\|\underline{\mathbf s}_m\|_1-\vvvert\underline{\mathbf s}_m\vvvert_N\Big) \\
\mathtt{s.t.} ~\hspace{4.7cm}&\hspace{-4.4cm}\eqref{eq:P3_C1}, \eqref{saturation2},\nonumber
\end{align}
where $\mu>0$ is the penalty parameter that imposes a cost for the constraint violation of the constraints in \eqref{eq:P3_C22}. In particular, when $\mu \rightarrow \infty$, solving the above problem yields an approximate solution to  (P3) \cite{penalty_method}. 	However, initializing $\mu$  to be a sufficiently small value generally yields a good starting point for the proposed algorithm, even though  this point may be infeasible for (P3). By gradually increasing the value of $\mu$ by a factor of $\varrho>1$, we can maximize the original objective function, i.e., $\sum_{k=1}^{K} \theta_k \Psi_{\mathrm{RHS}_k}(\{\bar{s}_{m,u}\},\{\hat{s}_{m,u}\})$, and  obtain a solution that satisfies all the equality constraints in  \eqref{eq:P3_C22} within a predefined accuracy. This thus leads to  a two-layer iterative algorithm, where the inner layer solves the penalized optimization problem (P4) while the outer layer updates the penalty coefficient $\mu$, until the convergence is achieved. 

For any given $\mu>0$, (P4) is still a non-convex optimization problem due to the DC structure of its objective function, i.e., 
\begin{align} 
\underbrace{\mu \hspace{-1mm} \sum_{m=1}^M \hspace{-1mm} \vvvert\underline{\mathbf s}_m\vvvert_N +  \hspace{-1mm} \sum_{k=1}^{K} \theta_k \Psi_{\mathrm{RHS}_k}(\{\bar{s}_{m,u}\},\{\hat{s}_{m,u}\})}_{\text{convex}:f_1(\cdot)}   - \underbrace{\mu \hspace{-1mm}\sum_{m=1}^M \|\underline{\mathbf s}_m\|_1}_{\text{convex}:f_2(\cdot)}, \nonumber
\end{align}
for which we can apply the concave-convex procedure (CCCP) \cite{mm_tsp} to approximately solve it in an iterative manner.  Specifically, at each iteration $l=1,2,\dots,$ we approximate $f_1$ by a linear function using its first-order Taylor series  to form a convex approximate optimization problem.  Let $\bar{s}_{m,u}^{(l)}$ and $\hat{s}_{m,u}^{(l)}$, $\forall m,u$, denote the  values of decision variables at the beginning of iteration $l$. The linear approximation of $f_1$, denoted by $\underline {f_1}$, is given by (with constant terms ignored)
\begin{multline} 
\underline {f_1}(\bar{s}_{m,u},\hat{s}_{m,u}) = \mu\sum_{m=1}^M \sum_{u=1}^U \bar g_{m,u}^{(l)}\bar{s}_{m,u} + \hat g_{m,u}^{(l)}\hat{s}_{m,u}  \\ +\sum_{k =1}^K \sum_{m=1}^M \sum_{u=1}^U \theta_k  \Big(\bar{\beta}_{k,m,u}^{(l)}\bar{s}_{m,u}   + \hat{\beta}_{k,m,u}^{(l)}\hat{s}_{m,u} \Big), 
\end{multline}
with the constant coefficients given by 
\begin{align} 
\big\{\bar g_{m,u}^{(l)},\hat g_{m,u}^{(l)}\big\} &= 
\begin{cases} 
\big\{2\bar{s}_{m,u}^{(l)},2\hat{s}_{m,u}^{(l)}\big\} & \text{if  } ~~\pi(i)\leq N, \\
0      & \text{if  } ~~\pi(i)> N,
\end{cases} \\
\bar{\beta}_{k,m,u}^{(l)}&=\dfrac{1}{T} \int_{T} \dfrac{\sqrt{2R_s}}{\eta V_0} \bar{g}_{k,m,u}(t) z_k^{(l)}(t)  dt, \label{n2} \\
\hat{\beta}_{k,m,u}^{(l)}&=\dfrac{1}{T} \int_{T} \dfrac{\sqrt{2R_s}}{\eta V_0} \hat{g}_{k,m,u}(t) z_k^{(l)}(t)  dt, \label{n3}
\end{align}
where $z_k^{(l)}(t) \triangleq \exp((\sqrt{2R_{s}}  \sum_{m=1}^M \sum_{u =1}^U \bar{g}_{k,m,u}(t) \bar{s}_{m,u}^{(l)}+ \hat{g}_{k,m,u}(t)\hat{s}_{m,u}^{(l)})/(\eta V_0))$. Accordingly,   the approximate problem of (P4) for each iteration $l$ is given by 
\begin{align} 
\mathrm{(P4}-l):  
\mathop{\mathtt{maximize}}_{ \{\bar{s}_{m,u}\},\{\hat{s}_{m,u}\}}~ ~& \underline {f_1}(\bar{s}_{m,u},\hat{s}_{m,u}) -\mu \hspace{-1mm}\sum_{m=1}^M \|\underline{\mathbf s}_m\|_1 \nonumber \\
\mathtt{s.t.} ~\hspace{0.2cm}&\hspace{-0.2cm}~~\eqref{eq:P3_C1}, \eqref{saturation2}.\nonumber
\end{align}
Since the objective function of (P4$-l$) is concave,  (P4$-l$) is a convex  quadratically constrained quadratic  programming (QCQP) problem, where it can be easily solved by using numerical convex program solvers such as CVX \cite{227}. The overall DCP-based algorithm
to solve (P2) is summarized  in Algorithm 1.

\begin{algorithm} [t!]
	\caption{DCP-based Algorithm for Solving (P2) }\label{Tabel_3}
	\begin{algorithmic}[1]
		\STATE \textbf{Initialize:} { $\{\bar{s}_{m,u}\}$, $\{\hat{s}_{m,u}\}$, and $\mu>0$ }
		\REPEAT
		\REPEAT
		\STATE Update $\{\bar{s}_{m,u}\}$ and $\{\hat{s}_{m,u}\}$ by solving (P4$-l$).
		\UNTIL{{The fractional decrease of the objective value of (P4$-l$) is below a prescribed threshold $  \epsilon_1>0$ or a given  maximum number of iterations is reached.}	}
		\STATE Update the penalty coefficient as $\mu \leftarrow \varrho\mu$.
		\UNTIL{{The constraint violation $ \big(\sum_{m=1}^M\|\underline{\mathbf s}_m\|_1-\vvvert\underline{\mathbf s}_m\vvvert_N\big)$  is below a prescribed threshold $  \epsilon_2>0$.}}
	\end{algorithmic}
\end{algorithm}

\section{Simulation Results}

\subsection{Performance Evaluation for Single-ER WPT} \label{sec3:simulationl}

We consider a single-user WPT system, where  an  ET with $M=8$ antennas delivers wireless power to an ER with a single antenna. 
The central frequency  and the total available  bandwidth are given as $f_c=915$ MHz and $B=10$ MHz, respectively. Moreover, we assume that $U=160$ with $\Delta_u = 62.5$ KHz. For the channel from the ET to ER, we assume $45.65$ dB path loss  and a NLoS channel power delay profile with $5$ distinct paths. For simplicity, it is assumed that the signal power is equally divided among all paths. However, the delay of each path and its phase are assumed to be  uniformly distributed over $[0,~0.3~\mu \text{s}]$  and $[0,~2\pi]$, respectively. Fig. \ref{fig:channels} illustrates one realization of the frequency response of the assumed channel, which will be used in the simulations in this subsection. The ohmic resistance of the ER's antenna and load are  set as $R_s=50~\Omega$ and $R_L=10$ k$\Omega$, respectively. The diode parameters  are given by $I_0=3~\mu$A, $I_{BV}=300~\mu$A, $V_0=25.86$ mV, $V_{B}=3.8$ V, and $\eta=1.05$ (in accordance with HSMS-285x datasheet \cite{datasheet}).

\begin{figure}[t!]
	\centering
	\includegraphics[trim = 0mm 0mm 0mm 0mm, clip,width=0.8\linewidth]{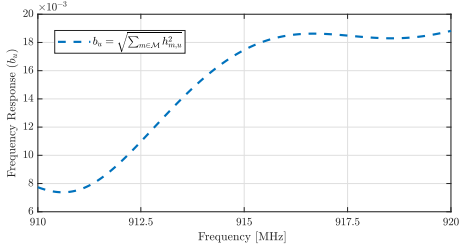}
	\caption{Frequency response of the multipath channel.}  
	\label{fig:channels} \vspace{-0.2cm}
\end{figure} 

\begin{figure}[t!]
	\centering
	\includegraphics[width=0.8\linewidth]{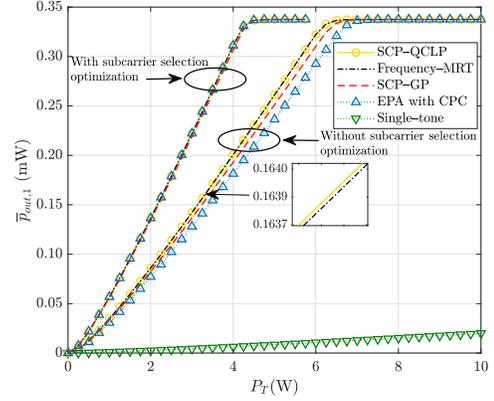}
	\caption{Output DC power versus maximum transmit power when $ N=32. $}  
	\label{fig:power} \vspace{-0.5cm}
\end{figure}

For comparison with our proposed subcarrier selection  method in Proposition \ref{sub_allocation_pro}, we consider $U=N$ equally-spaced subcarriers without  selection over the total available bandwidth as a benchmark design. Moreover, for comparison with our proposed frequency-MRT solution  and SCP-QCLP algorithm,  we consider  three benchmark designs: i)  the SCP-GP algorithm \cite{Clerckx:2016b} with the $4$-th order  truncated Taylor approximation model of the diode;   ii) equal power allocation (EPA) over all ET antennas and frequency tones with channel phase compensation (CPC), i.e., $\phi_{m,u}=-\psi_{m,u}$,  $\forall m,u$;  and iii) single-tone power allocation \cite{888} based on the linear model of the diode.  
To implement the SCP-GP algorithm,  we set its stopping criteria as $\epsilon=10^{-3}$ \cite{Clerckx:2016b}. We also consider the EPA solution as the initial point for both SCP-QCLP and SCP-GP algorithms.

Fig. \ref{fig:power} and Fig. \ref{fig:tones} show the output DC power at the ER by varying  $P_T$ with fixed $N=32$, and by varying  $N$ with fixed $P_T=10$ W, respectively, under different waveform design schemes. It is observed that SCP-QCLP achieves the best performance over all values of  $P_T$ and $ N $, while the frequency-MRT solution performs very close to it. This suggests that the frequency-MRT solution provides a practically appealing alternative solution for the  single-ER WPT system considering its low complexity.

\begin{figure}[t!]
	\centering
	\includegraphics[trim = 0mm 0mm 0mm 0mm, clip,width=0.8\linewidth]{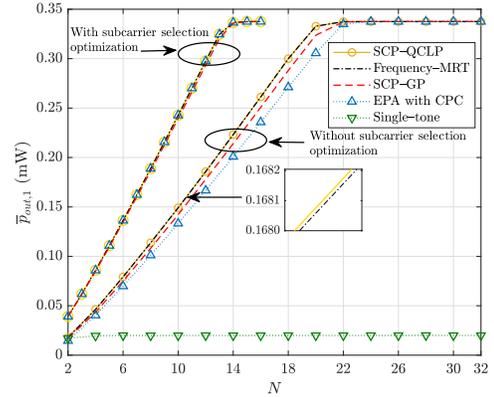}
	\caption{Output DC power versus  number of frequency tones  when $ P_T = 10 $ W.}  
	\label{fig:tones} \vspace{-0.5cm}
\end{figure}

It is also observed that when $ P_T $ and $ N $ are smaller, SCP-GP performs close to SCP-QCLP. However, the gap between them  increases with   $P_T$ or $ N $,  explained as follows. In \cite{Clerckx:2016b}, SCP-GP is proposed based on the truncated Taylor approximation model of the diode, which is valid when the voltage drop across the diode, i.e., $v_d(t)$ shown in Fig. \ref{fig:first}, is sufficiently small. However, by increasing the transmit power $P_T$, the peak voltage collected by the ER  increases, which causes  the voltage drop across the diode to increase.  Thus, the accuracy of the truncated Taylor approximation model of the diode reduces, and the performance  of SCP-GP degrades. Moreover,  SCP-QCLP converges much faster than SCP-GP. This is due to the fact that SCP-QCLP requires   to solve a simple QCLP problem at each iteration only, while SCP-GP needs to solve a GP problem at each iteration, which requires more computational time. The convergence time of SCP-QCLP and SCP-GP is  compared in Table \ref{tab:perf_compar}.\footnote{Simulations are implemented on MATLAB R2017b and tested on a PC with a Core i7-6700 CPU, 8-GB of RAM, and Windows 7.}
\begin{table}[h]
	\centering
	\caption{Convergence time comparison.} \vspace{-0mm}
	\label{tab:perf_compar}
	\begin{tabular}{|c|S|S|S|}
		\hline
		\multirow{2}{*}{Design  approach} & \multicolumn{3}{c|}{Convergence time (second)} \\ \cline{2-4} 
		& $N=4$ & $N=16$ & $N=32$ \\ \hline
		SCP-QCLP & 0.032 & 0.052 & 0.220 \\ \hline
		SCP-GP~~~~~& 6.325 & ~22.742 & 49.465 \\ \hline
	\end{tabular} 
\end{table}

\begin{figure} 
	\centering
	\begin{subfigure}[b]{0.4\textwidth} 
		\centering
		\includegraphics[trim = 0mm 0mm 0mm 0mm, clip,width=\columnwidth]{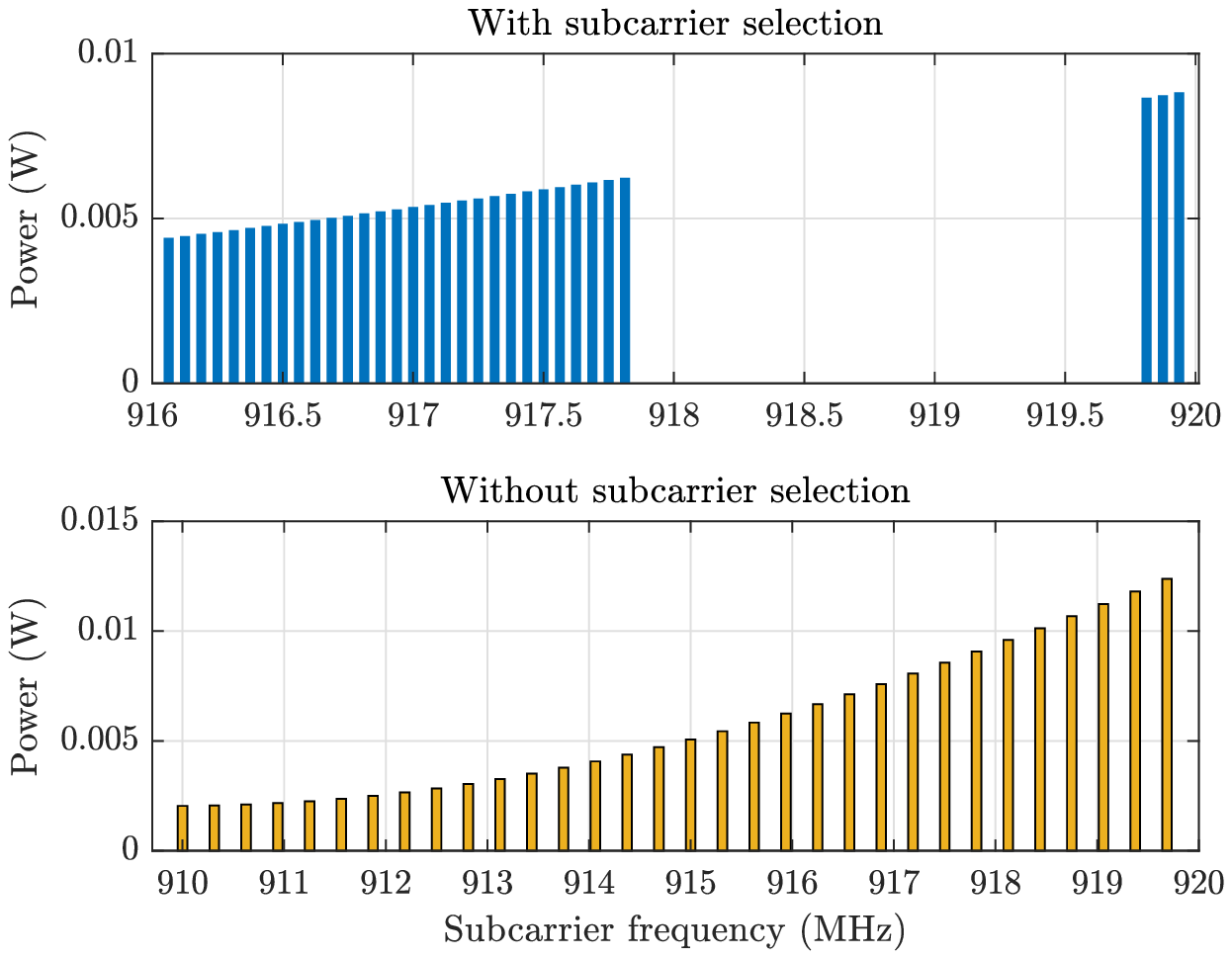}
		\caption{Frequency-domain power allocation.}
		\label{fig_3_a} 
	\end{subfigure}
	\hspace{-0.1cm}
	\begin{subfigure}[b]{0.4\textwidth} 
		\centering
		\includegraphics[trim = 0mm 0mm 0mm 0mm, clip,width=\columnwidth]{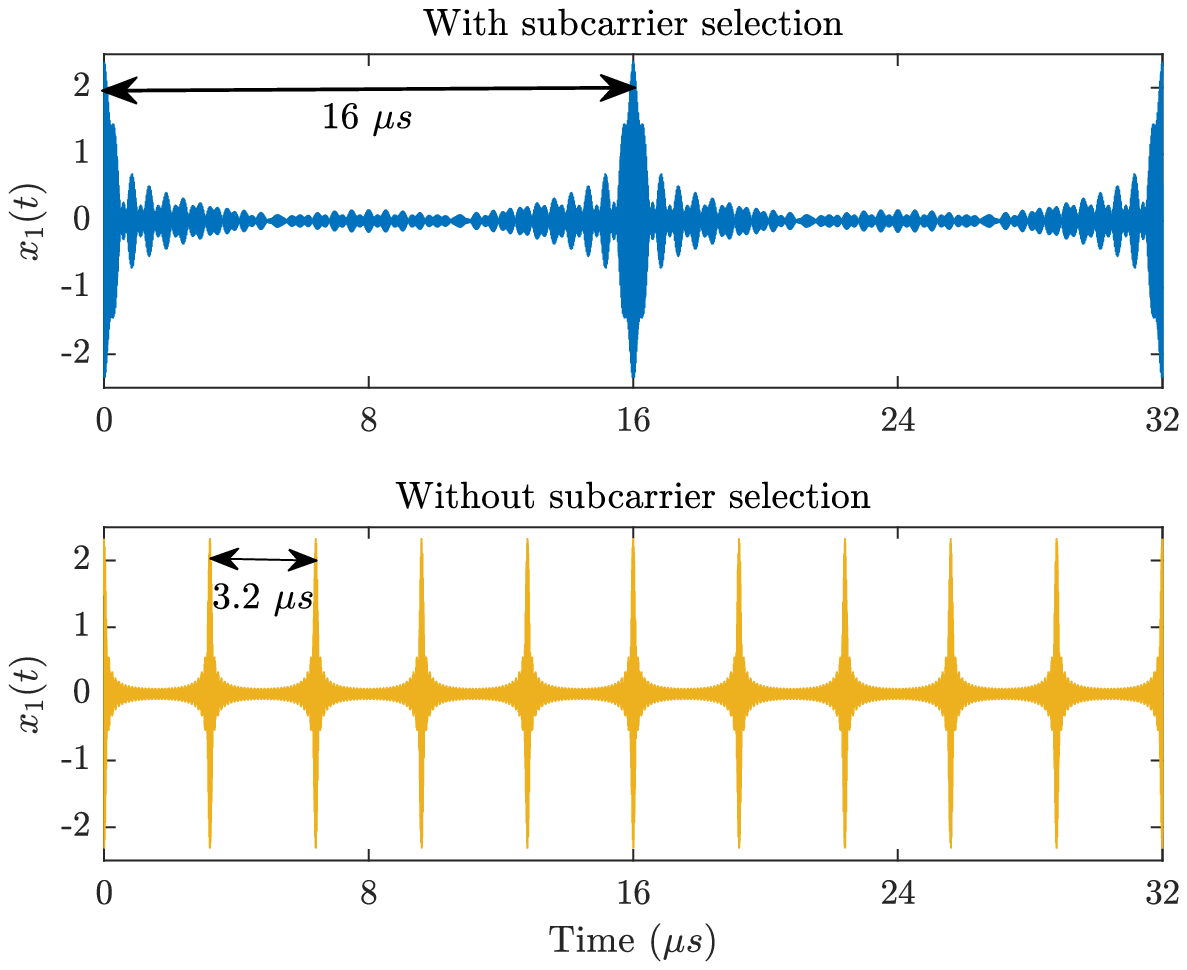}
		\caption{Time-domain  waveform. }
		\label{fig_3_b} 
	\end{subfigure}
	\caption{Frequency-domain power allocation and time-domain waveform of the optimized power signal on the first ET antenna,   when $N=32$ and $P_T=4$ W.} \label{reflection} 
	\label{wav} \vspace{-0.5cm}
\end{figure} 

It is further  observed from Fig. \ref{fig:power} and Fig. \ref{fig:tones} that subcarrier selection optimization leads to significantly    better performance while reaching to the power saturation with smaller $P_T$ and $ N $, as compared to the conventional design with equally-spaced sub-carriers (even after the same  power waveform optimization). Moreover, the EPA scheme with optimized  subcarrier selection  performs even better than SCP-QCLP without subcarrier selection. 	Thus, given a certain   bandwidth in practice for WPT, it is not optimal to firstly assign a fixed number ($U=N$) of equally-spaced subcarriers and then optimize the power waveform based on them; instead, our results show that it is more efficient to assign a larger number ($U$ with $U>N$) of equally-spaced subcarriers with smaller inter-subcarrier spacing,  then based on the CSI select $N$ out of $U$ subcarriers and further optimize the corresponding power waveform.

To further illustrate the effect of subcarrier selection on the optimized power signals, we show in Fig. \ref{wav}(a) and Fig. \ref{wav}(b) their frequency-domain power allocations and time-domain waveforms on the first ET antenna, respectively, for the two cases with subcarrier selection ($U=160$ and $N=32$) versus without subcarrier selection ($U=N=32$), both using SCP-QCLP with $P_T=4$ W.
It is observed from Fig. \ref{wav}(a) that the subcarrier selection results in  reduced and unequal inter-subcarrier spacing. In addition, the optimized power waveforms shown in Fig. \ref{wav}(b) have the PAPRs of 17.9 dB and 17.7 dB for the cases with subcarrier selection and without subcarrier selection, respectively, which are similar. This is expected since for both cases, the same number of $N=32$ subcarriers are used and the PAPR of the power waveform is mainly determined by $N$.  It is further observed from Fig. \ref{wav}(b) that the period $T$ of the optimized power signal with  subcarrier selection  is larger than that without subcarrier selection, since $T=1/\Delta_u$ and $\Delta_u$ in the former case is much smaller than that in the latter case. Note that the diode in the ER rectifier is turned on less frequently  as $T$ becomes larger  (see Fig. \ref{fig3:first}). Nevertheless, this  does not affect the ER harvested power with subcarrier selection (or larger $T$)  as long as a capacitor with sufficiently  high Q factor is used, such that the signal peak will not decay much over one period of the power  signal.


\subsection{Performance Evaluation for Multi-ER WPT} \label{sec4:simulationl_multi} 

\begin{figure}[t!] 
	\centering
	\includegraphics[trim = 0mm 0mm 0mm 0mm, clip,width=0.7\linewidth]{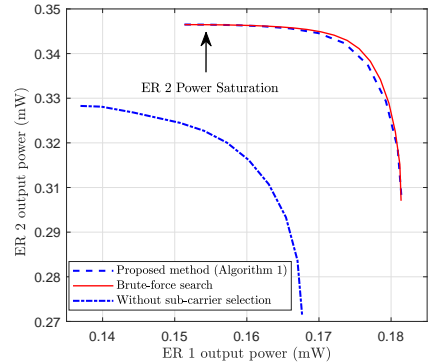} 
	\caption{Power region of harvested DC power  with $ K=2 $, $ U=16 $, and $ N=8 $.}  
	\label{fig:boundary} \vspace{-0.5cm}
\end{figure}

Next, we  consider a multi-user WPT system, where a single ET with $ M = 4 $ antennas delivers wireless power to two ERs  that are located with $6$ meters (m) and $4$ m distances from the ET, respectively, each with a single antenna. Other system/channel  parameters are the same as in Section \ref{sec3:simulationl} (if not specified otherwise).

First,  we compare the performance in terms of the Pareto boundary of the harvested DC power region  with different  user weight pairs $ (\theta_1,\theta_2) $, among our proposed solution
(Algorithm 1), and Algorithm 1 without sub-carrier selection (i.e., with $U=N$ equally-spaced sub-carriers), as well as a brute-force search with sub-carrier selection. For  the brute-force  search, we randomly generated $10^6$ number of power waveforms (subject to the constraints in \eqref{eq:P3_C1}-\eqref{eq:P3_C2}) for each user weight pair and selected the power waveform that results in the maximum weighted  sum of the output powers.  We consider $ U=16 $, $ N=8 $, and $ P_T = 5 $ W. It is observed from Fig. \ref{fig:boundary} that our proposed method performs very close to the brute-force search, and also significantly outperforms the scheme without sub-carrier selection.  

\begin{figure}[t!] 
	\centering
	\includegraphics[trim = 0mm 0mm 0mm 0mm, clip,width=0.7\linewidth]{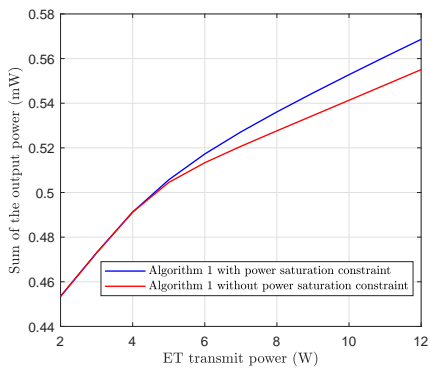} 
	\caption{Sum output power versus ET transmit power with $ K=2 $, $ U=16 $, and $ N=8 $.}  
	\label{fig:boundary2} \vspace{-0.5cm}
\end{figure}

Next, we plot in Fig. \ref{fig:boundary2} the sum of the output power of ER 1 and ER 2 by varying $P_T$ with $\theta_1=\theta_2=0.5$.  In particular, we compare the performance of Algorithm 1 with  the power saturation constraint in  \eqref{saturation2} considered or  not. It is observed that adopting the power saturation constraint results in better performance with increasing $P_T$,  which is explained as follows. When $P_T$ increases, ER 2 (nearer to the ET) first  reaches the power saturation. In this case, if the power saturation constraint is considered for the subcarrier selection and power waveform design, the ET will not further increase ER 2's output power, but instead  aim to enhance ER 1's output power in its power waveform design. The above result shows that the consideration of the ER  power saturation effect captured in our newly derived non-linear rectifier model is crucial to the power waveform optimization in the multi-ER WPT system, especially when the ERs' channels are heterogeneous  in practice.   

\section{Conclusion}

In this paper, we studied the power waveform design problem for a multi-antenna  multi-user WPT system under  the nonlinear rectenna  model. First, we developed a refined rectifier  model based on circuit analysis which accurately captures not only  the nonlinearity of the rectifier circuit  with respect to the power  waveform, but also its intrinsic power saturation effect. Next, based on this new model and by adopting a multisine power waveform structure, we formulated new problems to jointly optimize the subcarrier selection and power waveform to maximize the harvested DC power at the ER(s) for both the cases of single-ER and multi-ER WPT. We proposed efficient algorithms for solving these challenging optimization problems by leveraging non-convex optimization techniques such as  SCP, CCCP, and DCP. The new rectenna  model was validated by circuit simulation, and the performance of the proposed algorithms was shown to be superior to the existing/benchmarking schemes based on the conventional rectenna model. In particular, it was shown that the subcarrier selection and consideration of rectenna power saturation lead to significantly better performance (in terms of the output DC power) of designed  power waveform  for both single-ER and multi-ER WPT systems and multi-ER WPT system, respectively.

\bibliographystyle{IEEEtran}  
\footnotesize{\bibliography{bibfile}}
\end{document}